\title{Curving Flat Space-Time by Deformation Quantization?}
\author{Albert Much\footnote{amuch@matmor.unam.mx}}
\affil{Centro de Ciencias Matem\'aticas,\\
	Universidad Nacional Aut\'onoma de M\'exico,\\
	C.P. 58190, Morelia, Michoac\'an, Mexico
} 
\newtheorem{theorem}{\textsc{Theorem}}[section]
\newtheorem{lemma}{\textsc{Lemma}}[section]
\newtheorem{proposition}{\textsc{Proposition}}[section]
\theoremstyle{definition}
\newtheorem{definition}{\textsc{Definition}}[section]
\newtheorem{convention}{Conventions}[section]
\theoremstyle{remark}
\newtheorem{remark}{Remark}[section]
\numberwithin{equation}{section} 
\begin{document}
	\maketitle
	\abstract{We use a deformed differential structure  to obtain a curved metric by a   deformation quantization of the  flat space-time. In particular, by setting the deformation parameters to be equal to physical constants we obtain  the Friedmann-Robertson-Walker (FRW) model for inflation and a deformed version of the FRW space-time. By calculating classical Einstein-equations for the extended space-time we obtain non-trivial solutions. Moreover, in this framework we obtain the Moyal-Weyl, i.e. a constant non-commutative space-time, as a consistency condition.} 
	%%%%%%%%%%%%%%%%%%%%%%
	% Inhaltsverzeichnis %
	\tableofcontents
 
	\section{Introduction}
	In the search for a complete   theory of quantum gravity there has been many proposals over the last years. One approach that has gained quite some popularity comes from mathematics and is known by non-commutative geometry. The main motivation to study non-commutative geometry in the realm of physics comes from the "geometrical" measurement problem, \cite{DFR}. Basically, the problem goes as follows; by combining principles of quantum mechanics and general relativity it turns out that the measurement of a space-time point with arbitrary precision is not possible and thus space-time, around the Planck length, does not have a continuous structure. Hence, geometry of space-time has to be replaced by  a non-commutative version thereof. One of the most studied and well understood examples of such a non-commutative geometric structure is known by the Moyal-Weyl plane. In essence, one has a constant non-commutativity between the space-time coordinates (that are replaced by operators). This is equivalent to the non-commutativity that is introduced between the observables, i.e. momentum and coordinate, in quantum mechanics. 
	\\\\
	Another essential reason to study non-commutative geometry is the search for quantum gravity.  Let us elaborate on this point a bit further.   Essentially, the Einstein Field Equations of general relativity tell us that gravity is a force,   experienced by the curvature of space-time. Hence,  ultimately quantization of gravity corresponds to the  quantization  of space-time.  What is quantization? This question has many possible answers. However, all answers have something in common. Namely, we take a classical theory and by introducing, for example, a new product, we extend  the classical framework. An interesting example is to take classical mechanics and perform the so-called \textbf{deformation quantization} in order to obtain quantum mechanics but in terms of functions with a non-commuting product, see \cite{SZ7}. Hence, in the former example we went from classical mechanics to quantum mechanics, by replacing the point-wise product of the phase-space coordinates by a non-commutative (but associative) product. Following the same train of thought,   deformation quantization of the point-wise product of 	classical space-time coordinates  leads us to space-time coordinates that multiply with a non-commutative product,   \cite{SZ1}. Therefore, by replacing the commutative point-wise product on classical space-times by a non-commutative one, we turn a classical space-time into a non-commutative space-time. 	Since deformation   takes us   from classical to quantum we call the resulting non-commutative space-times,   quantum space-times. \\\\
	However, the deformation quantization map from classical to quantum is neither unique nor the only path to quantize a theory. Furthermore, non-commutative algebras do not necessarily lead to non-commutative space-times. In particular, non-commutative algebras can as well be associated with curvature of the space-time.\footnote{We are indebted to the referee for this remark.} Let us mention the example of the Anti-de Sitter space, see \cite{SZ2} for a recent review. The deformation of the commutative algebra of  translations in Minkowski space-time leads to a non-commutative structure between the translations. The deformation parameter dictating this non-commutative structure is interestingly enough given by the cosmological constant. Hence, by the deformation of the translational algebra (that is a sub-algebra of the Poincar\'e algebra) the flat space-time (Minkowski) becomes the curved space-time known by Anti-de Sitter. Note, that not only deformation quantization maps the flat to the curved case but also deformation in a Lie-group contextual sense, can achieve that. Basically stability considerations, in particular the vanishing of the second co-homology group can achieve the before-mentioned outcome (see \cite{SZ3}).
	\\\\ 
	Another, equally valid approach to quantize a theory, be it quantum mechanics, i.e. the phase space of classical mechanics,	 or  space-time is to start with a non-commutative algebra given by   operators on a Hilbert space. Hence, the so called quantization map  promotes  the classical functions to operators, see \cite{GG}  and \cite{GO}.  With regards to non-commutative spacetimes, recent success has been achieved in this direction   by obtaining, through techniques developed and conditions asserted in non-commutative geometry, the equivalent of a metric,    see \cite{BM}. 	This approach differs from the ones described in the former paragraphs, since here one deals with non-commutating operators instead of functions with a non-commuting product. 	\\\\  In this work we focus on this approach in order to be as near to the standard approach to quantum mechanics as possible, see \cite{GA1}. 	 Furthermore, the idea in this article is to begin with the Minkowski metric, that is given by self-adjoint operators, and by a strict deformation quantization induce a curved space-time metric.      Hence, the concrete question that we pose is the following: Can we obtain a curved space-time from a purely flat space-time by strict deformation quantization of linear operators that are defined on a Hilbert space? Hence, is it possible to obtain gravity in this framework? Why is that of particular interest? 		Intuitively,  there are three main ideas that motivate this work. The first idea takes into account that in the standard approach of quantum mechanics  observables are given by self-adjoint unbounded operators on a Hilbert-space.   Hence,   in a mathematically well-defined theory of quantum gravity,  a quantity   identified with the metric,   has to be given in terms of self-adjoint operators  defined on a dense subset of the Hilbert space as well. 	The second intuitive idea is the fact that some quantum gravity theories, such as Loop quantum gravity, see \cite{SZ4} for a recent review, face the issue of obtaining classical general relativity in the classical limit let alone the flat space-time (see \cite{SZ5}). Therefore we start with the flat metric. Third,    mathematically  well-defined considerations, such as strict deformation quantization,  should lead us from   flat  to  curved space-time, hence obtaining  gravitational effects already in terms of operator-valued terms.       
	\\\\ 
Since, a scheme of obtaining a curved  metric from a flat one by a strict deformation quantization procedure is still missing, this paper intends to resolve  this issue. We present a concrete and moreover (mathematically) strict scheme in which it is possible  to apply a deformation quantization of  the flat metric and obtain a curved space-time. The emergence of curvature by a deformation quantization is achieved by combining two major mathematical developments in non-commutative geometry. The first development that we use is the universal differential structure of Connes \cite[Chapter 3, Section 1]{C}
that associates to any associative unital algebra a differential structure. While the second framework used  is the Rieffel deformation, \cite{R} (and extensions thereof  see \cite{GL1}, \cite{GL2}, \cite{BS}, \cite{BLS}), that deals with strict deformation quantizations of $C^{*}-$algebras. 
	\\\\
	Let us explain in further detail how those two frameworks are combined.	 The initial point of our considerations is the deformation of the standard 	(commutative) differential structure (see \cite{DHS} or/and \cite{BM}), i.e.
	\begin{equation*}  
	[\hat{x}^{\mu},d\hat{x}^{\nu}]=\sum\limits_{\sigma=0}^{n} C^{\mu\nu}_{\,\,\,\,\sigma}\,d\hat{x}^{\sigma},
	\end{equation*}
where the constants $C$ are imaginary valued and symmetric in the first two upper indices. This deformation   is a solution of relations obtained by the application  of a linear operator, satisfying the Leibniz rule, on the commutation relations of commuting  coordinate operators. Further on, we define the flat metric as the tensor product of two differentials and consider deformations of the flat  line-element. This particular line-element is defined by using the flat metric.  Considering deformations of the line-element instead of the metric itself  has, apart from technical reasons, the   physical intuition that emphasizes the importance on the line-element. In particular, the deformations are performed with the so-called warped convolutions \cite{BLS} which supply isometric representations of Rieffel's strict deformations of $C^*$-dynamical systems. In order to use the deformation procedures we have to find representations of this algebra. We were able to achieve this by using the richness of the Heisenberg-Weyl algebra. Since  we work with unbounded representations we use the seminal work of \cite{P} to remain in a rigorous framework. Before analyzing the outcome of the deformation  we prove the rigorousity of the scheme. In particular, we prove the convergence of the oscillatory integrals defining the deformation,  self-adjointness and  uniqueness of the deformed operators.
\\\\
The outcome of this deformation quantization is interesting. We obtain a coordinate dependent conformal transformation of the flat undeformed metric. Hence, we obtain a whole family of conformal flat metrics. Furthermore, we investigate the outcome of such a scheme  in a physical context. It turns out that, for a particular choice of the deformation parameters of the algebra, we obtain a deformed Friedmann-Robertson-Walker (FRW) space-time. Moreover, in a certain limit we obtain the FRW space-time for a dark energy dominated universe. Another example of a physical interesting  space-time is obtained by this scheme, namely that of an  ultra-static one. In addition to curving space-time by a strict deformation quantization of the flat space-time metric we obtain the necessity of a non-commutative space-time, namely that of the Moyal-Weyl plane, purely out of a consistency requirement. We do not claim that this holds in general, i.e. for all curved space-times that are given by operator-valued objects. However, for the class of conformal flat metrics that we obtained by deformation quantization, that in addition are given as operator-valued tensor products, for centrality of the metric tensor to hold (see \cite{BM}) the commutation relations of the Moyal-Weyl plane have to be imposed on the algebra of the space-time operators. \\\\
The paper is organized as follows; In the second section we give a compact summary of the non-commutative differential calculus, representations of unbounded algebras and the deformation related to the Rieffel deformations known as warped convolutions. The third section examines the well-definedness, self-adjointness and uniqueness  of the deformation. In section four we turn our attention to gives examples of physically relevant space-times obtained by this scheme.  Section five investigates the outcome  of    non-commutativity between space and time by demanding centrality of the metric tensor. We end this work with a conclusion and an outlook on possible extensions of the deformation. 
	\begin{convention}
		Throughout this work we use $d=n+1$, for $n\in\mathbb{N}$. The Greek letters are split into  $\mu, \,\nu=0,\dots,n$. We use Latin letters  $i,\,j,\,k,\cdots$ for  spatial components which run from $1,\dots,n$ and we let the letters $b,\,c$ run from $1,\dots,d$. Furthermore,  we choose the following convention for the Minkowski scalar product of $d$-dimensional vectors, $a\cdot p=a_0p^0+a_kp^k=a_0b^0- \vec{a}\cdot\vec{p}$. Moreover, we use the following notation concerning inner products of matrices with vectors, $(\Theta x)_{\mu}=\Theta_{\mu\nu} x^{\nu}$.
	\end{convention}
	\section{Preliminaries}
	This section is separated in two parts. The first part gives a swift introduction of the non-commutative differential calculus. Although this calculus can already be viewed as a deformation of the commutative case, we intend to perform another deformation.  In particular, the quantization we use in this context is  known by the name of Rieffel \cite{R} deformation or as an   extension  thereof, namely warped convolutions \cite{BLS}. These strict deformations make extensive use of the spectral calculus. Hence, in order to apply these quantizations on our non-commutative differential calculus we need to have representations of the respectively chosen algebras on a Hilbert space. Primarily, we need to have self-adjoint representations of the $^{*}$-algebras in order to use warped convolutions.  	Therefore, the second part of this section is devoted to  the connection between the non-commutative differential calculus and representations of $^{*}$-algebras. 
	\subsection{Non-commutative Differential Calculus}
	In order to give a thorough path to non-commutative geometry we define in the next step the universal differential calculus of an arbitrary associative algebra (not necessarily commutative)  as follows (see \cite[Chapter 3, Section 1]{C} or \cite[Chapter 6, Section 1]{L}).
	\begin{definition} \textit{Universal differential algebra}\\\\
		Let $\mathcal{A}$ be an associative algebra with unit over $\mathbb{C}$. 	Then, the \textbf{universal differential algebra of forms}  $\Omega(\mathcal{A})=\bigoplus_{p }  \Omega^{p}(\mathcal{A})$ is a graded algebra defined as follows. For $p=0$ we have $\Omega^{0}(\mathcal{A})=\mathcal{A}$. The space $\Omega^1(\mathcal{A})$ of one-forms is generated, as a left $\mathcal{A}$-module by a  $\mathbb{C}$-linear   operator $d:\mathcal{A}\rightarrow\Omega^1(\mathcal{A})$, called the universal differential, which satisfies the relations, 
		\begin{equation} \label{lr}
		d^2=0,\qquad	d(a_1a_2)=(da_1)a_2+a_1da_2, \qquad \forall a_1,\,a_2\in  \mathcal{A} .
		\end{equation}
		The left $\mathcal{A}$-module $\Omega^1(\mathcal{A})$ can be endowed  with a structure of a right 
		$\mathcal{A}$-module as well by   using the Leibniz rule given in Equation (\ref{lr}). Hence $\Omega^1(\mathcal{A})$ is a bimodule.  Moreover, the space $\Omega^{p}(\mathcal{A})$ is defined as
			\begin{equation*}  
		 \Omega^{p}(\mathcal{A})=\underbrace{\Omega^1(\mathcal{A})\cdots \Omega^1(\mathcal{A})}_{p-times}.
			\end{equation*}
	\end{definition}
Next, we turn our attention to a concrete algebra given in	\cite{DHS}. The authors start with  a commutative algebra $\mathcal{A}_c$ which is generated by self-adjoint commutative coordinate operators $\hat{x}^{\mu}$  fulfilling, 
	\begin{equation}\label{cr1}                     
	[\hat{x}^{\mu},\hat{x}^{\nu}]=0.
	\end{equation}
By applying the operator $d$ on the commutator relations of the algebra $\mathcal{A}_c$ (Equation (\ref{cr1})), we have
	\begin{align}\label{cr3}
	[d\hat{x}^{\mu},\hat{x}^{\nu}]+ [\hat{x}^{\mu},d\hat{x}^{\nu}]=0
.\end{align}
There are two possible solutions to these commutator equations. The first one is for the commutators to be separately zero, which in turn corresponds to the commutative case. However, a more general solution is given by  
	\begin{equation} \label{cr2}
	[\hat{x}^{\mu},d\hat{x}^{\nu}]=\sum\limits_{\sigma=0}^{n} C^{\mu\nu}_{\,\,\,\,\sigma}\,d\hat{x}^{\sigma},
	\end{equation}
	where here the sum was written to indicate that the index $\sigma$ does not follow the Einstein convention, i.e. it is   the usual (Euclidean) sum.

	\begin{lemma}\label{l1}
		The constants $C^{\mu\nu}_{\,\,\,\,\sigma}$ have to  fulfill the following  consistency conditions. \\
		\begin{enumerate} \renewcommand{\labelenumi}{(\roman{enumi})}
			\item 	 $C^{\mu\nu}_{\,\,\,\,\sigma}$ are symmetric in the first two indices,  
			\begin{align*}
			C^{\mu\nu}_{\,\,\,\,\sigma}= C^{\nu\mu}_{\,\,\,\,\sigma}.
			\end{align*}
			\item By demanding a consistent differential calculus,   we have the following restrictions on the constants $C$,  
			\begin{align*}
			\sum\limits_{\sigma=0}^{n} \left(C^{\mu\nu}_{\,\,\,\,\sigma} 	 C^{\lambda\sigma}_{\,\,\,\,\kappa}-C^{\lambda\nu}_{\,\,\,\,\sigma} 	 C^{\mu\sigma}_{\,\,\,\,\kappa}\right)  =0.
			\end{align*}
	 
		\end{enumerate}
	\end{lemma}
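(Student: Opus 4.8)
The plan is to extract both constraints directly from the structure relations already in hand, working under the natural hypothesis that $\{d\hat{x}^\sigma\}_{\sigma=0}^{n}$ is a linearly independent generating set of the left $\mathcal{A}_c$-module $\Omega^1(\mathcal{A}_c)$. This freeness lets me pass from an identity of the form $\sum_\sigma A_\sigma\,d\hat{x}^\sigma=\sum_\sigma B_\sigma\,d\hat{x}^\sigma$ to the coefficient-wise equalities $A_\sigma=B_\sigma$, after which both items become short algebraic manipulations of commutators in the associative graded algebra $\Omega(\mathcal{A}_c)$.

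For item (i) I would begin from Equation (\ref{cr3}), which is itself the result of applying $d$ to $[\hat{x}^\mu,\hat{x}^\nu]=0$. Using the antisymmetry of the commutator to rewrite $[d\hat{x}^\mu,\hat{x}^\nu]=-[\hat{x}^\nu,d\hat{x}^\mu]$, Equation (\ref{cr3}) collapses to $[\hat{x}^\mu,d\hat{x}^\nu]=[\hat{x}^\nu,d\hat{x}^\mu]$. Substituting the ansatz (\ref{cr2}) into both sides gives $\sum_\sigma C^{\mu\nu}_{\,\,\,\,\sigma}\,d\hat{x}^\sigma=\sum_\sigma C^{\nu\mu}_{\,\,\,\,\sigma}\,d\hat{x}^\sigma$, and comparing coefficients of $d\hat{x}^\sigma$ yields the symmetry $C^{\mu\nu}_{\,\,\,\,\sigma}=C^{\nu\mu}_{\,\,\,\,\sigma}$ immediately.

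For item (ii) the key idea is to exploit that $\Omega(\mathcal{A}_c)$ is associative, so the Jacobi identity holds for the three elements $\hat{x}^\lambda$, $\hat{x}^\mu$ and $d\hat{x}^\nu$. Because $[\hat{x}^\lambda,\hat{x}^\mu]=0$ by Equation (\ref{cr1}), the cyclic term $[d\hat{x}^\nu,[\hat{x}^\lambda,\hat{x}^\mu]]$ drops out, and the Jacobi identity reduces to $[\hat{x}^\lambda,[\hat{x}^\mu,d\hat{x}^\nu]]=[\hat{x}^\mu,[\hat{x}^\lambda,d\hat{x}^\nu]]$. I would then evaluate each double commutator by applying (\ref{cr2}) twice, treating the constants $C$ as scalars that pull out of the commutator: the left-hand side produces $\sum_{\sigma,\kappa}C^{\mu\nu}_{\,\,\,\,\sigma}C^{\lambda\sigma}_{\,\,\,\,\kappa}\,d\hat{x}^\kappa$, while the right-hand side produces $\sum_{\sigma,\kappa}C^{\lambda\nu}_{\,\,\,\,\sigma}C^{\mu\sigma}_{\,\,\,\,\kappa}\,d\hat{x}^\kappa$. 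Matching coefficients of each $d\hat{x}^\kappa$ and moving everything to one side delivers precisely $\sum_\sigma\left(C^{\mu\nu}_{\,\,\,\,\sigma}C^{\lambda\sigma}_{\,\,\,\,\kappa}-C^{\lambda\nu}_{\,\,\,\,\sigma}C^{\mu\sigma}_{\,\,\,\,\kappa}\right)=0$.

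The two substitutions of the ansatz are routine; the only place demanding genuine care is the linear-independence hypothesis on $\{d\hat{x}^\sigma\}$ that legitimizes comparing coefficients, together with the conceptual point that the phrase \emph{demanding a consistent differential calculus} is exactly the requirement that the imposed relation (\ref{cr2}) be compatible with the Jacobi identity inherited from associativity. I would therefore state the freeness assumption explicitly at the outset, since without it one can conclude only the weaker statement that the displayed combinations of $d\hat{x}^\sigma$ vanish, rather than that their coefficients do; this is the main obstacle, and it is structural rather than computational.
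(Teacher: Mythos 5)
Your proposal is correct and takes essentially the same route as the paper: item (i) is read off from Equation (\ref{cr3}) combined with the ansatz (\ref{cr2}), and item (ii) follows from the Jacobi identity for $\hat{x}^{\lambda}$, $\hat{x}^{\mu}$, $d\hat{x}^{\nu}$, whose third cyclic term vanishes by commutativity of the coordinates, after which the ansatz is applied twice and coefficients of $d\hat{x}^{\kappa}$ are compared. The only difference is presentational: you state explicitly the linear-independence (freeness) hypothesis on $\{d\hat{x}^{\sigma}\}$ that the paper uses tacitly when it passes from the vanishing of $\sum_{\sigma,\kappa}\bigl(C^{\mu\nu}_{\,\,\,\,\sigma}C^{\lambda\sigma}_{\,\,\,\,\kappa}-C^{\lambda\nu}_{\,\,\,\,\sigma}C^{\mu\sigma}_{\,\,\,\,\kappa}\bigr)d\hat{x}^{\kappa}$ to the vanishing of its coefficients.
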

	\begin{proof}$\,$\\
		\begin{enumerate} \renewcommand{\labelenumi}{(\roman{enumi})}
			\item The first point can be seen directly from the defining Equation (\ref{cr3}).\\
			\item We prove the second item by using the commutativity of the coordinate operators $\hat{x}$ and by explicitly calculating  	 the Jacobi identities for    Equation (\ref{cr2}). The identities are  given by,
			\begin{align*}
	 	[ \hat{x}^{\lambda}, [\hat{x}^{\mu},d\hat{x}^{\nu}]]+[ \hat{x}^{\mu}, [d\hat{x}^{\nu},\hat{x}^{\lambda}]]  &=
		\sum\limits_{\sigma=0}^{n}\left( C^{\mu\nu}_{\,\,\,\,\sigma}\, 	[ \hat{x}^{\lambda},d\hat{x}^{\sigma}]
-C^{\lambda\nu}_{\,\,\,\,\sigma}\, 	[ \hat{x}^{\mu},d\hat{x}^{\sigma}]	\right)	\\&=
		\sum\limits_{\sigma=0}^{n} \sum\limits_{\kappa=0}^{n}\left(C^{\mu\nu}_{\,\,\,\,\sigma} 	 C^{\lambda\sigma}_{\,\,\,\,\kappa}-C^{\lambda\nu}_{\,\,\,\,\sigma} 	 C^{\mu\sigma}_{\,\,\,\,\kappa}\right)	d\hat{x}^{\kappa}  .
			\end{align*}

		\end{enumerate}
	\end{proof} 
	Since we demand, in the following sections, that the constants $C$ are imaginary we can extend the conjugation $^{*}$ of complex numbers to the differential algebra $\Omega(\mathcal{A}_c)$. In addition to demanding that $^{*}$ commutes with $d$ and that $^{**}$ is the identity we have the assumption that $^{*}$ is a graded anti-automorphism (see \cite{DHS}).  Next, in order to represent our algebra $\mathcal{A}_c$ and its respective differential calculus on a Hilbert space, we first  prove that we have a $^*$-algebra at hand. 
	Therefore, let us define in a short manner a general $^*$-algebra, \cite{P}.
	\begin{definition}
		A $^*$-algebra $\mathcal{A}$ is an algebra over the complex numbers with a $^*$-operator satisfying;
		\begin{enumerate} \renewcommand{\labelenumi}{(\roman{enumi})}
			\item  $A^{**}=A$,
			\item $(\alpha A+ B)^*=\overline{\alpha}A^*+B^*$
			\item $(AB)^*=B^*A^*$	,
		\end{enumerate}
		for all $A,B\in\mathcal{A}$ and complex numbers $\alpha$.  Since, we work with representations of unbounded algebras in general our $\mathcal{A}$ will not be normed. 
	\end{definition}
	\begin{lemma}
		The algebra $\mathcal{A}_c$  is a $^*$-algebra. 
	\end{lemma}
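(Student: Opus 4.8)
The plan is to exhibit the $^{*}$-operation explicitly on $\mathcal{A}_c$ and then check the three axioms of the preceding definition one by one. First I would use that $\mathcal{A}_c$ is generated, as a commutative unital algebra over $\mathbb{C}$, by the self-adjoint operators $\hat{x}^{\mu}$ subject only to the relations $[\hat{x}^{\mu},\hat{x}^{\nu}]=0$ of Equation (\ref{cr1}). Hence every element is a complex polynomial in the commuting generators and can be written as a finite sum $A=\sum_{I}\alpha_{I}\,\hat{x}^{\mu_{1}}\cdots\hat{x}^{\mu_{k}}$, where $I=(\mu_{1},\dots,\mu_{k})$ runs over multi-indices and $\alpha_{I}\in\mathbb{C}$. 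The natural candidate for the involution is to declare the generators self-adjoint, $(\hat{x}^{\mu})^{*}=\hat{x}^{\mu}$, to act on scalars by complex conjugation, $\alpha^{*}=\overline{\alpha}$, and to extend to monomials by reversing the order of the factors, $(\alpha\,\hat{x}^{\mu_{1}}\cdots\hat{x}^{\mu_{k}})^{*}=\overline{\alpha}\,\hat{x}^{\mu_{k}}\cdots\hat{x}^{\mu_{1}}$, prolonging antilinearly to all of $\mathcal{A}_c$.

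With this definition in hand the three properties are immediate. For (i) a double application of $^{*}$ reverses the order of the factors twice and conjugates the coefficient twice, returning $A^{**}=A$. Property (ii) is built directly into the antilinear extension. For (iii) the order-reversal in the definition is precisely what makes $^{*}$ an anti-homomorphism, so $(AB)^{*}=B^{*}A^{*}$ follows by inspecting the image of a product of two monomials.

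The only genuine point to address, and the step I expect to be the main obstacle, is well-definedness: because $\mathcal{A}_c$ is commutative, a given element admits many different monomial expansions, so one must verify that $A^{*}$ does not depend on the chosen representative. Equivalently, one shows that the ideal of relations is $^{*}$-stable. This reduces to the single computation $[\hat{x}^{\mu},\hat{x}^{\nu}]^{*}=(\hat{x}^{\mu}\hat{x}^{\nu})^{*}-(\hat{x}^{\nu}\hat{x}^{\mu})^{*}=\hat{x}^{\nu}\hat{x}^{\mu}-\hat{x}^{\mu}\hat{x}^{\nu}=-[\hat{x}^{\mu},\hat{x}^{\nu}]=0$, so the defining commutators are mapped into the ideal they generate and $^{*}$ therefore descends consistently to the quotient algebra. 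Once well-definedness is secured, properties (i)--(iii) complete the verification that $\mathcal{A}_c$ is a $^{*}$-algebra.
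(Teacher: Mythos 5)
Your proof is correct, but it takes a genuinely different route from the paper. The paper's proof is a one-line appeal to an assumption made just before the lemma: the complex conjugation $^{*}$ is \emph{assumed} to extend to the whole differential algebra $\Omega(\mathcal{A}_c)$ as a graded anti-automorphism commuting with $d$ and satisfying $^{**}=\mathrm{id}$ (following the reference [DHS]), and the lemma is then read off as the restriction of this structure to the degree-zero part $\Omega^{0}(\mathcal{A}_c)=\mathcal{A}_c$. You instead \emph{construct} the involution from scratch: hermitian generators, antilinear extension, order reversal on monomials, and — the step the paper never addresses — well-definedness on the quotient, which you reduce to $^{*}$-stability of the ideal of relations via $[\hat{x}^{\mu},\hat{x}^{\nu}]^{*}=-[\hat{x}^{\mu},\hat{x}^{\nu}]$. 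Your approach buys existence: it shows that such a $^{*}$-operation on $\mathcal{A}_c$ actually exists rather than postulating it, and it isolates the only nontrivial point (descent to the quotient). What it does not capture, and what the paper's assumption-based route is really after, is compatibility of $^{*}$ with the rest of the differential structure — the relations $[\hat{x}^{\mu},d\hat{x}^{\nu}]=C^{\mu\nu}_{\,\,\,\,\sigma}d\hat{x}^{\sigma}$ with imaginary $C$, and the requirement that $^{*}$ commute with $d$ — which is needed later but is not part of the lemma as stated. One small cosmetic point: the final ``$=0$'' in your ideal-stability computation is misleading, since the commutator is not zero in the free algebra where that computation lives; the correct statement is simply that the image of each generating relation lands in the ideal, which your preceding equality already establishes.
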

	\begin{proof}
		Due to   the assumption that the graduation is acting as an anti-automorphism it follows that the algebra $\mathcal{A}_c$  $^*$-algebra. 
	\end{proof}

	\subsection{Representation of  the Deformed Differential-Structure}
	
	In this section we want to give representations of the algebra  $\mathcal{A}_c$ and its universal differential. To do so, we use the \textbf{Schr\"odinger 
		representation} \cite[Chapter IIIV.5]{RS1} of the Heisenberg-Weyl algebra. In this context the  operators
	$(Q^{b},P^{c})$, satisfying the \textbf{canonical commutation relations}  (CCR)
	\begin{equation}\label{ccr}
	[Q^b,P^{c}]= i\delta^{bc},
	\end{equation}
	are represented as   self-adjoint operators on the Schwartz space
	$\mathscr{S}(\mathbb{R}^d)$. Here $Q^{b}$ and $P^{c}$ are the closures of
	$q^{b}$ and multiplication by $i {\partial}/{\partial q_c}$ on the dense domain
	$\mathscr{S}(\mathbb{R}^d)$ respectively. In the next step we define a representation of a $^*$-algebra that is not necessarily bounded, \cite{P}.
	
	\begin{definition}\label{defsr}
		A representation $\pi$ of an algebra $\mathcal{A}$ on a Hilbert space $\mathscr{H}$ is a mapping of $\mathcal{A}$ into linear operators all defined on a common dense domain $\mathcal{D}(\pi)$ $\pi:\mathcal{A}\rightarrow \mathcal{D}(\pi)$ and $\pi$ satisfies $\pi(1)=\mathbb{I}
		_{\mathscr{H}}$:\\
		\begin{enumerate} \renewcommand{\labelenumi}{(\roman{enumi})}
			\item  $\pi(\alpha A+B)f=\alpha\pi( A)+\pi(B)f$, for all  $A,B\in\mathcal{A}, f\in  \mathcal{D}(\pi)$ and all $\alpha\in\mathbb{C}$.
			\item 	$\pi(A)\mathcal{D}(\pi)\subset\mathcal{D}(\pi)$ for all $A\in\mathcal{A}$ and \\$\pi(A)\pi(B)f=\pi(AB)f$ for all $A,B\in\mathcal{A}$ and $f\in \mathcal{D}(\pi)$.
			\item  A representation $\pi$ of a $^*$-algebra $\mathcal{A}$ on a Hilbert space is said to be \textbf{hermitian} or a \textbf{$^*$-representation} if \\
			$\left(f,\pi(A)g\right)=\left(\pi(A^*)f,g\right)$ for all $f,g\in \mathcal{D}(\pi)$ and $A\in\mathscr{H}$ i.e. $\pi(A^*)\subset\pi(A)^*$.
		\end{enumerate}$\,$\\
		A representation $\pi$ is \textbf{hermitian} iff for every hermitian $A\in\mathcal{A}$, $\pi(A)$ is hermitian. Furthermore 	a representation $\pi$ is \textbf{self-adjoint} iff, in addition to hermiticity, $\mathcal{D}(\pi)=\mathcal{D}(\pi^*)$.
	\end{definition} 

	 Next, we give a representation of the algebra $\mathcal{A}_c$ and prove that it corresponds to a $^*$-representation. 
	\begin{lemma}
		Let the $d$-dimensional algebra $\mathcal{A}_c$ be given as,
		\begin{align*}
		[\hat{x}^{\mu},\hat{x}^{\nu}] =0.
		\end{align*}   Then, a faithful $^*$-representation of this  algebra, denoted by  $\pi: \mathcal{A}_c \rightarrow \mathscr{H}=\mathscr{L}^2(\mathbb{R}^d)$, is given in terms of unbounded self-adjoint operators on the dense domain  $\mathcal{D}(\pi)=\mathscr{S}(\mathbb{R}^d)$ by
		\begin{align}\label{al1}
		\pi(\hat{x}^{\mu})&=\frac{a^{\mu}}{2}(Q^{\mu+1}P^{\mu+1}+P^{\mu+1}Q^{\mu+1}),
				\end{align} 
where $a^{\mu}\neq0$ is a real vector.
	\end{lemma}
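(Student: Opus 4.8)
The plan is to verify, in order, the four requirements packaged in the statement: that $\pi$ has $\mathscr{S}(\mathbb{R}^d)$ as a common invariant dense domain, that it respects the defining relation $[\hat{x}^\mu,\hat{x}^\nu]=0$ so that it is genuinely a representation in the sense of Definition \ref{defsr}, that each $\pi(\hat{x}^\mu)$ is self-adjoint, and that $\pi$ is injective.

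First I would record the structural observation that drives everything: for each $\mu$ the operator $\pi(\hat{x}^\mu)=\tfrac{a^\mu}{2}(Q^{\mu+1}P^{\mu+1}+P^{\mu+1}Q^{\mu+1})$ involves only the single conjugate pair $(Q^{\mu+1},P^{\mu+1})$, i.e. it acts in the $(\mu+1)$-th tensor factor of $\mathscr{L}^2(\mathbb{R}^d)\cong\mathscr{L}^2(\mathbb{R})^{\otimes d}$ and as the identity elsewhere. Since $q^b$ and $i\,\partial/\partial q_c$ map $\mathscr{S}(\mathbb{R}^d)$ into itself, so do all finite products and sums of the $Q^b,P^c$; hence $\pi(\hat{x}^\mu)\mathscr{S}(\mathbb{R}^d)\subset\mathscr{S}(\mathbb{R}^d)$, which supplies the invariant common domain demanded by items (i)--(ii) of Definition \ref{defsr}. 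The homomorphism property then reduces to checking that the images commute: for $\mu\neq\nu$ the CCR (\ref{ccr}) give $[Q^{\mu+1},P^{\nu+1}]=i\delta^{\mu+1,\nu+1}=0$ together with $[Q^{\mu+1},Q^{\nu+1}]=[P^{\mu+1},P^{\nu+1}]=0$, so the two symmetrized products commute on $\mathscr{S}(\mathbb{R}^d)$. Thus $[\pi(\hat{x}^\mu),\pi(\hat{x}^\nu)]=0=\pi([\hat{x}^\mu,\hat{x}^\nu])$ and $\pi$ extends multiplicatively to the polynomial algebra $\mathcal{A}_c$.

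For the hermiticity property, since $a^\mu\in\mathbb{R}$ and $Q^{\mu+1},P^{\mu+1}$ are symmetric on $\mathscr{S}(\mathbb{R}^d)$, the symmetric (Jordan) combination is symmetric, so $\pi((\hat{x}^\mu)^*)=\pi(\hat{x}^\mu)\subset\pi(\hat{x}^\mu)^*$, which is item (iii). The substantive point --- and the step I expect to be the main obstacle --- is upgrading symmetry to genuine \emph{self-adjointness}, so that $\mathcal{D}(\pi)=\mathcal{D}(\pi^*)$ after closure, rather than a merely hermitian operator with nontrivial deficiency. Here I would exploit the explicit form: using (\ref{ccr}) one has $\tfrac12(Q^{\mu+1}P^{\mu+1}+P^{\mu+1}Q^{\mu+1})=\mp i\,(q_{\mu+1}\,\partial/\partial q_{\mu+1}+\tfrac12)$, which is, up to the nonzero real factor $a^\mu$, the infinitesimal generator of the dilation (scaling) group $(U(s)f)(q)=e^{s/2}f(\dots,e^{s}q_{\mu+1},\dots)$. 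This group is strongly continuous and unitary on $\mathscr{L}^2(\mathbb{R}^d)$, so by Stone's theorem its generator is self-adjoint; equivalently, the Mellin transform in the variable $q_{\mu+1}$ conjugates the operator to multiplication by a real variable, exhibiting the diagonalization directly. That $\mathscr{S}(\mathbb{R}^d)$ is actually a core (essential self-adjointness already on the Schwartz domain) I would obtain either from Nelson's analytic-vector theorem or from the standard fact that an operator of the form $A\otimes\mathbb{I}$ with $A$ essentially self-adjoint on a core is essentially self-adjoint on the algebraic tensor product, of which $\mathscr{S}(\mathbb{R}^d)$ is a refinement.

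Finally, for faithfulness I would argue that a nonzero element of $\mathcal{A}_c$ is a nonzero polynomial $P(\hat{x}^0,\dots,\hat{x}^n)$, and its image is the corresponding polynomial in the $d$ mutually (strongly) commuting self-adjoint operators $\pi(\hat{x}^\mu)$, each of which, because $a^\mu\neq0$, has purely continuous spectrum equal to all of $\mathbb{R}$ and acts in an independent tensor slot. By the joint spectral calculus their joint spectrum is all of $\mathbb{R}^d$, so $\pi(P)=0$ would force the continuous function $P$ to vanish identically on $\mathbb{R}^d$, whence $P=0$; thus $\ker\pi=\{0\}$ and $\pi$ is faithful. Assembling the four parts yields the claim.
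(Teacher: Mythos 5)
Your proof is correct and takes essentially the same route as the paper's: invariance of $\mathscr{S}(\mathbb{R}^d)$ under polynomials in the Schr\"odinger operators, commutativity from the CCR, and---decisively---self-adjointness by recognizing $\tfrac{a^\mu}{2}(Q^{\mu+1}P^{\mu+1}+P^{\mu+1}Q^{\mu+1})$ as the generator of the unitary dilation group in the $(\mu+1)$-th variable (the paper cites this fact from the literature; you derive it via Stone's theorem plus a core argument). The only difference is level of detail: where the paper asserts faithfulness for $a^\mu\neq 0$ and defers self-adjointness to references, you supply explicit and sound arguments (joint spectral calculus for injectivity, Nelson/analytic-vector or tensor-product reasoning for essential self-adjointness on the Schwartz domain).
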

	
	\begin{proof}
	First, we prove that the  representation $\pi$ satisfies all the conditions required from the definition of such a representation (see Definition \ref{defsr}). The requirement of linearity  is straight forward. The first part of the second condition, i.e. 
		$\pi(\hat{x})\mathcal{D}(\pi)\subset\mathcal{D}(\pi)$ for all $\hat{x}\in\mathcal{A}_c$, follows from the fact that we  work with the Schr\"odinger representation of the Heisenberg-Weyl algebra. Since arbitrary polynomials of the representation of this algebra have the stable domain $\mathscr{S}(\mathbb{R}^d)$, the proof is completed. The multiplicative  and the $^*$-representation  property follows  from the Schr\"odinger representation, i.e. 	since the representations of the algebras are hermitian polynomials of the Heisenberg algebra it follows from \cite{P}[Section V. Example 2] that the representations are self-adjoint.   Moreover, if the vector $a^{\mu}\neq0$ the $^*$-representation $\pi$ is faithful, i.e. $ker(\pi)=\{\emptyset\}$ and hence an inverse exists. Since, by construction,  the symmetric operator $\pi(\hat{x}^{\mu})$ is the dilatation operator 	self-adjointness follows, \cite[Equation 10.9]{T}.
	\end{proof} 
	 Next we give the representation of one-forms of our respective algebra. It is in the following defined as a commutator of the momentum operator with the representation of elements of the algebra. The reason therein lies in the fact that we need  a representation of the exterior derivative that fulfills the conditions required from such a $\mathbb{C}$-linear operator.
		
\begin{lemma}\label{ex1}
Let a consistent differential algebra be given as 
	 \begin{align}\label{rd1}
	 [ \hat{x}^{\mu} , d\hat{x}^{\nu} ]  = ia^{\mu}\delta^{\mu\nu} d\hat{x}^{\nu} .
	 \end{align}
	Then,  a representation of the
	universal differential as a  derivation
	 can be defined by the following commutator relation, 
\begin{align}\label{rd}
\pi(d\hat{x}^{\mu}):= i{q}_{b}\,[P^{b} , \pi(\hat{x}^{\mu})],
\end{align}
where  $q$ is a real non-zero vector. The representation of the derivative obeys Leibniz and satisfies 
\begin{align} 
 \pi(d^2\hat{x}^{\mu})=i{q}_{b}\,[P^{b}, \pi(d\hat{x}^{\mu})]= -{q}_{b}\,{q}_{c}\,[P^{b},[P^{c},\pi( \hat{x}^{\mu})]]=0 .
 \end{align}
	\end{lemma}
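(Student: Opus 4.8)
The plan is to exploit that the right-hand side of the definition (\ref{rd}) is an inner derivation implemented by the momenta. Writing $\delta_q(\,\cdot\,):=iq_b[P^b,\,\cdot\,]$, I would first note that $\delta_q$ is manifestly $\mathbb{C}$-linear and, being a commutator against a fixed operator, is a derivation by construction. All the manipulations below take place on the common invariant domain $\mathscr{S}(\R^d)$, on which every polynomial in the Heisenberg generators $Q^b,P^c$ acts and which it leaves stable; consequently the iterated commutators are everywhere defined and the rearrangements are legitimate. This reduces the lemma to three verifications: the Leibniz rule, the nilpotency $d^2=0$, and the consistency of (\ref{rd}) with the defining relation (\ref{rd1}).

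For the Leibniz property I would compute, on generators,
\begin{align*}
\pi(d(\hat{x}^{\mu}\hat{x}^{\nu})) &= iq_b[P^b,\,\pi(\hat{x}^{\mu})\pi(\hat{x}^{\nu})] \\
&= iq_b\bigl([P^b,\pi(\hat{x}^{\mu})]\,\pi(\hat{x}^{\nu}) + \pi(\hat{x}^{\mu})\,[P^b,\pi(\hat{x}^{\nu})]\bigr),
\end{align*}
which is precisely $\pi(d\hat{x}^{\mu})\,\pi(\hat{x}^{\nu}) + \pi(\hat{x}^{\mu})\,\pi(d\hat{x}^{\nu})$, reproducing the second relation in (\ref{lr}). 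Since both sides are derivations agreeing on the generators, they agree throughout the algebra, so $\pi\circ d$ is a genuine representation of the universal differential.

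The nilpotency $\pi(d^2\hat{x}^{\mu})=0$ reduces to the commutativity of the momenta. Iterating (\ref{rd}) gives $\pi(d^2\hat{x}^{\mu}) = -q_b q_c\,[P^b,[P^c,\pi(\hat{x}^{\mu})]]$; expanding the inner commutator with the explicit dilatation form (\ref{al1}) leaves a term proportional to the single momentum $P^{\mu+1}$, and the outer bracket $[P^b,P^{\mu+1}]$ vanishes by the CCR (\ref{ccr}). Hence $d^2=0$, matching the first relation in (\ref{lr}).

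Finally, to confirm that (\ref{rd}) genuinely realizes the differential algebra (\ref{rd1}), I would carry out the explicit CCR computation. Evaluating $[P^b,\pi(\hat{x}^{\mu})]$ from (\ref{al1}) yields $\pi(d\hat{x}^{\mu}) = a^{\mu} q_{\mu+1}\,P^{\mu+1}$ (no sum on $\mu$), and then $[\pi(\hat{x}^{\mu}),\pi(d\hat{x}^{\nu})]$, computed with $[Q^{\mu+1},P^{\mu+1}]=i$, reproduces $ia^{\mu}\delta^{\mu\nu}\pi(d\hat{x}^{\nu})$. The step requiring the most care is the index bookkeeping forced by the shift $\mu\mapsto\mu+1$ in the representation (\ref{al1}): it is exactly this shift, together with the choice of $\pi(\hat{x}^{\mu})$ as mutually commuting dilatation operators, that makes the off-diagonal ($\mu\neq\nu$) brackets vanish and produces the Kronecker-$\delta$ structure of (\ref{rd1}). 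Beyond this, the only genuine obstacle is the unboundedness of the operators, which is already neutralized by the stability of $\mathscr{S}(\R^d)$ noted above, so no further analytic input is required.
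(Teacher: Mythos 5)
Your proof is correct. Its core --- linearity and the derivation/Leibniz property coming for free from the commutator structure, and nilpotency because one momentum bracket strips all $Q$-dependence from the dilatation operator (\ref{al1}) so that the second bracket hits only momenta --- is the same argument the paper gives, just written out in full. Where you genuinely differ is in how ``consistent'' is established. The paper checks consistency abstractly: it verifies (via the Jacobi-identity constraints of Lemma \ref{l1}, citing \cite{DHS}) that the structure constants read off from (\ref{rd1}) satisfy the quadratic consistency conditions, and it never explicitly confirms that the operators defined by (\ref{rd}) actually obey (\ref{rd1}); that identity is only used later, where it is ``recalled'' in the proof of Lemma \ref{l3}. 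You instead establish everything at the level of the representation: computing $\pi(d\hat{x}^{\mu})=a^{\mu}q_{\mu+1}P^{\mu+1}$ and checking $[\pi(\hat{x}^{\mu}),\pi(d\hat{x}^{\nu})]=ia^{\mu}\delta^{\mu\nu}\pi(d\hat{x}^{\nu})$ from the CCR. Your route buys an explicit realization, and hence consistency, since commutators of operators on a common invariant domain satisfy the Jacobi identity automatically and your $\pi(d\hat{x}^{\mu})$ are linearly independent when $a^{\mu}\neq 0$ and $q_{\mu+1}\neq 0$; this fills in a verification the paper leaves implicit. The paper's route buys representation-independence: it certifies the abstract algebra before any Hilbert-space realization is chosen. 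Two small points: the vanishing of $[P^{b},P^{c}]$ is part of the Heisenberg--Weyl relations (momenta commute in the Schr\"odinger representation) rather than literally of (\ref{ccr}) as displayed; and the paper's proof additionally records self-adjointness of $\pi(d\hat{x}^{\mu})$, which you omit, though it is not part of the lemma's statement.
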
 

\begin{proof}
 	From the Jacobi identities of the deformed differential algebra (see Lemma \ref{l1}) it is easily verified that the given algebra in this lemma is consistent (see as well \cite{DHS}).   Moreover, the fact that the derivative satisfies the required assumptions  of a   $\mathbb{C}$-linear operator follows from the commutator properties. In particular, nilpotency of the operator follows from the fact that the representations of the algebra $\pi(\hat{x})$ is linear in the coordinate. Hence, by applying the commutator of the momentum twice, the resulting object vanishes. Furthermore, the representation of the differential is self-adjoint due to the self-adjointness of the momentum operator and the representation of the algebra. 
	\end{proof}	  
	\begin{remark}
Let us mention that	the representations of the universal differential  by a commutator    have the same spirit as Connes' differential forms, \cite[Chapter 4]{C} (or see \cite[Chapter 6.2]{L}).
	\end{remark}
	
	\section{Deformation of the Line-Element}
	The main idea of this investigation is to obtain a curved metric  by a deformation of   flat space-time. There are different models since the deformation method of warped convolutions and the choice of a particular algebra of the differential structure allows an extensive family of toy models.  
	However, in this section we restrict ourselves to the simplest algebra. Next, we define the flat metric $\eta$ as the tensor product of two one-forms, i.e. $\eta\in \Omega^{1}(\mathcal{A}_c) \otimes_{\mathcal{A}_c} \Omega^{1}(\mathcal{A}_c) $
	\begin{align} 
	\eta=\eta_{\mu\nu}d\hat{x}^{\mu}\otimes_{\mathcal{A}_c} d\hat{x}^{\nu}.
	\end{align}
	It is more natural to use the $\mathcal{Z}(\mathcal{A}_c)$-module tensor product, where $\mathcal{Z}(\mathcal{A}_c)$ denotes the centre of the algebra as in \cite{BDF}. However, since our algebra is commutative the tensor product that we take agrees with the most natural choice.\\\\	We could directly deform the metric, however in physics an object of more interest is the    line-element. For the flat Minkowski case  it is given by 
		\begin{align} 
		ds^2=\eta_{\mu\nu}\, d\hat x^{\mu}d\hat x^{\nu}.
		\end{align}	
	Since this is the object of interest we intend to deform this entity and interpret the resulting deformation as a curvature of space-time. Before we deform the line-element we start with a proposition and  definition. 
	 \begin{proposition}\label{ur1}
	 	Let a $^*$-representation of the operators   $\hat{x}^0,\dots,\hat{x}^n$, denoted as $\pi(\hat{x}^0),\dots,\pi(\hat{x}^n)$,    be given by mutually commuting self-adjoint operators on the dense domain  $\mathcal{D}(\pi)\subseteq\mathscr{H}$. Then, there is a strongly continuous $(n+1)$-parameter group of unitary operators $U(p)$ on $\mathscr{H}  $ given as:
	 	\begin{equation}U(p)=\mathrm{exp}( {ip_{\mu}\pi(\hat{x}^{\mu}})),\qquad\forall p\in\mathbb{R}^{d}.
	 	\end{equation} 
	 	\begin{proof} 
	 		This is simply a restatement of Stone's theorem in the notation of $^*$-representations. 	\end{proof}	
	 \end{proposition}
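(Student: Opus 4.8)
The plan is to recognize that this proposition is essentially a direct application of Stone's theorem on one-parameter groups of unitary operators, suitably lifted to the multi-parameter setting. The key hypothesis is that $\pi(\hat{x}^0),\dots,\pi(\hat{x}^n)$ are mutually commuting self-adjoint operators on a common dense domain $\mathcal{D}(\pi)\subseteq\mathscr{H}$. Since they commute, their spectral measures commute, which is precisely what allows one to build a genuine $(n+1)$-parameter unitary group rather than merely $n+1$ separate one-parameter groups.

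First I would fix an arbitrary $p\in\mathbb{R}^d$ and consider the single self-adjoint operator $A_p := p_\mu\pi(\hat{x}^\mu)$. Because the $\pi(\hat{x}^\mu)$ commute strongly and are self-adjoint, the linear combination $A_p$ is itself (essentially) self-adjoint on $\mathcal{D}(\pi)$; by the classical form of Stone's theorem, $U(p):=\mathrm{exp}(iA_p)=\mathrm{exp}(ip_\mu\pi(\hat{x}^\mu))$ is a well-defined unitary operator on $\mathscr{H}$, and $t\mapsto U(tp)$ is strongly continuous in $t$. The self-adjointness feeds directly into Stone's theorem, while the mutual commutativity is what guarantees the next step.

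Second I would verify the group law $U(p)U(p')=U(p+p')$ and strong continuity jointly in $p$. Here the mutual commutativity is essential: the standard Baker--Campbell--Hausdorff obstruction vanishes because all the generators commute, so the exponentials multiply additively in the arguments, giving
\begin{equation*}
U(p)U(p')=\mathrm{exp}(ip_\mu\pi(\hat{x}^\mu))\,\mathrm{exp}(ip'_\nu\pi(\hat{x}^\nu))=\mathrm{exp}(i(p+p')_\mu\pi(\hat{x}^\mu))=U(p+p').
\end{equation*}
The cleanest way to make this rigorous is to invoke the joint spectral theorem for the commuting family $\{\pi(\hat{x}^\mu)\}$, producing a common projection-valued measure $E$ on $\mathbb{R}^d$; then $U(p)=\int_{\mathbb{R}^d}\mathrm{exp}(ip_\mu\lambda^\mu)\,dE(\lambda)$ realizes the entire group at once, and both the group property and strong continuity follow from elementary properties of the functional calculus and dominated convergence.

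I expect the only genuine subtlety, rather than an obstacle, to be the careful handling of \emph{strong commutativity} as opposed to mere commutativity on the dense domain $\mathcal{D}(\pi)$: for unbounded operators, commutativity on a common core does not automatically imply that the spectral projections commute, and it is precisely the latter that is needed for the multi-parameter group to exist. Since the earlier lemmas establish that the $\pi(\hat{x}^\mu)$ are self-adjoint operators arising as closures of symmetric operators on the stable Schwartz domain $\mathscr{S}(\mathbb{R}^d)$, one may take strong commutativity as part of the standing hypothesis (it holds for the concrete dilatation-type representation constructed earlier). With that in hand, the proof reduces exactly to the restatement of Stone's theorem asserted by the author, and no further computation is required.
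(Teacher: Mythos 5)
Your proposal is correct and takes essentially the same route as the paper, whose entire proof is the single remark that the statement is a restatement of Stone's theorem; you simply supply the details (lifting to the multi-parameter setting via the joint spectral theorem for the commuting family, the group law, and strong continuity) that the paper leaves implicit. Your point that ``mutually commuting'' must be read as \emph{strong} commutativity of the spectral projections, not mere commutativity on the common dense domain, is a genuine subtlety the paper glosses over, and your resolution of it is the appropriate one.
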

	 	To deform flat space-time we use the method of warped convolution which supplies isomorphic representations of the so-called Rieffel deformations, \cite{BLS, GL1}. In the following we give the basic definition of the deformation. 
	 \begin{definition}\label{defwca}
	 	Let $\Theta$ be a skew-symmetric matrix on $\mathbb{R}^{d}$ and   $\mathcal{D}$ be the dense and  stable subspace of vectors in $\mathcal{H}$, which transform smoothly under the unitary operator $U$. Finally, let   $E$ be the spectral resolution of the   operator $U$. Then, the warped convolutions  of an operator $A$, defined on  $\mathcal{D}\subset\mathcal{H}$ and   denoted by $A_{\Theta}$, is  defined according to
	 	\begin{equation}\label{WC1}
	 	A_{\Theta} :=\int dE(x)\,\alpha_{\Theta x}(A) = (2\pi)^{-d}
	 	\lim_{\epsilon\rightarrow 0}
	 	\iint dx\, dy \,\chi(\epsilon x,\epsilon y )\,e^{-ixy}\, \alpha_{\Theta x}(A)\, U(y) ,
	 	\end{equation}
	 	where $\chi \in\mathscr{S}(\mathbb{R}^d\times\mathbb{R}^d)$ with $\chi(0,0)=1$ and $\alpha$ denotes the $\mathbb{R}^d$-action, i.e.
	 	$$
	 	\alpha_p(A):=U(p)\, A\,U(p)^{-1}\qquad  \forall p\in\mathbb{R}^{d}.
	 	$$
	 \end{definition}
	 The above definition has to be understood in a formal context. However, it is  rigorously defined  for a certain sub-algebra of bounded operators as was shown in 
	 \cite{BLS}. Since it is a fact of life that we have to deal with unbounded operators, in this context, we cannot use the proof of the bounded case. Since we, as well, strive for rigorousity we prove in the forthcoming sections the well-definedness of the oscillatory integrals for the unbounded case.
	\begin{definition}\label{defm1}
		Let the   deformed line-element, denoted by $ds^2_{\Theta}$, be defined as  
		\begin{align} 
	ds^2_{\Theta}:=	 (\eta_{\mu\nu}\,d\hat x^{\mu}d\hat x^{\nu})_{\Theta},
		\end{align}
		where the deformation is performed by using the unitary operators, given as in Proposition \ref{ur1}, and the deformed differentials (since a deformed constant, i.e. the flat metric, is simply equal to the undeformed case) are defined as, 
		\begin{align} 
		(d\hat{x}^{\mu}d\hat{x}^{\nu})_{\Theta}:=\pi^{-1}(\pi(d\hat{x}^{\mu}) \pi(d\hat{x}^{\nu}))_{\Theta}.
		\end{align}
	\end{definition}
	The reason for such a definition goes as follows. In order to perform the deformation we need the spectral representation. This is only given for self-adjoint representations. Hence, we are obliged to represent the algebra elements as self-adjoint (and unbounded) operators in order to proceed with the deformation. Therefore, we first take the representation of elements of the algebra and the differential structure, perform a deformation and map back to the abstract algebra of the universal differential algebra. Since the metric is  made up of a tensor product of the differentials deformation acts  solely on the differentials.

	\begin{remark}
		In order to ease readability we define the representations of the algebra and differential structure as follows,
		\begin{align*}
		\pi(\hat{x}^{\mu})=X^{\mu} , \qquad \pi(d\hat{x}^{\mu})=dX^{\mu} ,
		\end{align*}
		where the symbol $X$ should not be confused with the Schr\"odinger representation of the
		coordinate operator. 
	\end{remark}
	
	\subsection{The Simplest Case}
	In this section we deform the flat space-time in   order to obtain  curved geometry. The simplest deformation is done by using solely  the representations of the coordinate operators, i.e.  $X$, linearly in the algebra.   
\\\\
The particular algebra chosen in this section is the one of Lemma \ref{ex1}. Notice that this algebra is a very particular choice and in essence it can be extended (obeying the Jacobi identities) and therefore the study of deformations with an extended algebra can be studied. This in turn leads to more complex curved space-times. However, driven by  simplicity of the approach and a natural  physical outcome we stick to the simplest case. Since, we work with unbounded representations and not bounded operators  (as in \cite{BLS}) we have to prove that the  formula is well defined. In order to proceed we write 
the deformed operator in the scalar product     w.r.t. $\mathcal{H}$.
\begin{align*}
\langle \Psi,(dX^{\mu}dX^{\mu})_{{\Theta }}\Phi\rangle&=
\langle \Psi,(dX^{\mu})^2_{{\,\,\Theta }}\Phi\rangle \\&=
(2\pi)^{-d}
\lim_{\epsilon\rightarrow 0}
\iint  \, dx \,  dy \, e^{-ixy}  \, \chi(\epsilon x,\epsilon
y) {\langle \Psi,  
	U(y)\alpha_{\Theta x}(dX^{\mu})^2   \Phi\rangle} \nonumber \\&
=:
(2\pi)^{-d}
\lim_{\epsilon\rightarrow 0}
\iint  \, dx\,  dy \, e^{-ixy}  \, \chi(\epsilon x,\epsilon
y) \, b^{\mu}(x,y)
\end{align*}
for $\Psi, \Phi \in \mathcal{D}^{\infty}(dX) :=\{
\varphi \in\mathcal{D}(dX)| U(x)\varphi \in \mathcal{D}(dX)$  is smooth in $\|\cdot\|_{\mathcal{H}}
\} $ and 	   $\chi \in\mathscr{S}(\mathbb{R}^d\times\mathbb{R}^d)$ with $\chi(0,0)=1$. Note that for our case, i.e. for the special algebra that we chose, the stable domain $\mathcal{D}^{\infty}(dX)$ exists and is stable and it is given by $\mathscr{S}(\mathbb{R}^d)$.  In what follows we prove that the oscillatory integral is bounded for the particular algebra in Lemma \ref{ex1}. It is important to stress at this point that we are not interested in the deformation of off-diagonal expressions, i.e.  $(dX^{\mu}dX^{\nu})_{\Theta}$-terms since the Minkowski metric cancels those. Hence, the main object of interest in this section is given by $(dX^{\mu})^2_{{\,\,\Theta }}$.
\begin{lemma}\label{l3}
	The scalar product given by the function $b^{\mu}(x,y)$ (see definition above) is bounded by a real 
	finite constant  $D^{\mu}_{0,\,0}\in\mathbb{R}$, for each $\mu$, and a
	function according to 
	\begin{equation*}
\vert b^{\mu}(x,y)\vert = \vert{\langle \Psi,  
	U(y)\alpha_{\Theta x}(dX^{\mu})^2   \Phi\rangle}\vert\leq 	D^{\mu}_{0,\,0}\, e^{-2a^0(\Theta x)_0},  
	\end{equation*}
 for all $\Psi\in\mathcal{H}$ and $\Phi\in\mathscr{S}(\mathbb{R}^d)$. Let $\gamma$ and $\kappa$ be multi-indices. Then, all derivatives of the function $b^{\mu}(x,y)$ w.r.t. the variables $x$ and $y$ are bounded by the finite constant $D^{\mu}_{\gamma,\,\kappa}\in\mathbb{R}$, for all $\mu$'s, and a
 function as 
 	\begin{equation*}
 	\vert  \partial_{x}^{\gamma}\partial_{y}^{\kappa} b^{\mu}(x,y)\vert = \vert{ \partial_{x}^{\gamma}\partial_{y}^{\kappa}\langle \Psi,  
 		U(y)\alpha_{\Theta x}(dX^{\mu})^2   \Phi\rangle}\vert\leq  D^{\mu}_{\gamma,\,\kappa}\, e^{-2a^0(\Theta x)_0},  
 	\end{equation*}
  for all $\Psi,\,\Phi\in\mathscr{S}(\mathbb{R}^d)$. By using the properties of the cut-off functions and the oscillatory integral it follows that the deformation of the  differentials, i.e. $(dX^{\mu})^2_{\Theta}$,  is  well-defined on $\mathscr{S}(\mathbb{R}^d)$. 
\end{lemma}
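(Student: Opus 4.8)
The plan is to make the representation entirely explicit and reduce the diagonal differential $(dX^\mu)^2$ to a constant multiple of a single squared momentum operator, for which the action $\alpha$ acts by a pure scaling. Inserting the representation $X^\mu=\tfrac{a^\mu}{2}(Q^{\mu+1}P^{\mu+1}+P^{\mu+1}Q^{\mu+1})$ into the defining commutator (\ref{rd}) and using the canonical commutation relations (\ref{ccr}), one finds $\pi(d\hat x^\mu)=dX^\mu=a^\mu q_{\mu+1}P^{\mu+1}$, i.e. a real constant times one momentum operator. Hence $(dX^\mu)^2=(a^\mu q_{\mu+1})^2(P^{\mu+1})^2$ is, up to a constant, a second-order differential operator in the $(\mu+1)$-th variable, and in particular it maps $\mathscr{S}(\mathbb{R}^d)$ into itself.

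Next I would compute the $\mathbb{R}^d$-action. Since $X^\mu$ is a multiple of the dilatation generator in the $(\mu+1)$-th coordinate, a short computation gives $[X^\mu,P^{\mu+1}]=ia^\mu P^{\mu+1}$, so the Hadamard series for $U(\Theta x)P^{\mu+1}U(\Theta x)^{-1}$ collapses to a geometric series and yields $\alpha_{\Theta x}(P^{\mu+1})=e^{-a^\mu(\Theta x)_\mu}P^{\mu+1}$. Squaring, $\alpha_{\Theta x}\big((dX^\mu)^2\big)=e^{-2a^\mu(\Theta x)_\mu}(dX^\mu)^2$; this is exactly the exponential weight in the claimed bound (for the time component it is the displayed $e^{-2a^0(\Theta x)_0}$). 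Consequently the integrand factorizes as $b^\mu(x,y)=e^{-2a^\mu(\Theta x)_\mu}\,\langle\Psi,U(y)(dX^\mu)^2\Phi\rangle$, separating the $x$-dependence (the exponential) from the $y$-dependence (inside $U(y)$).

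I would then bound the residual matrix element uniformly in $y$. As $U(y)$ is unitary and $(dX^\mu)^2\Phi\in\mathscr{S}(\mathbb{R}^d)\subset\mathcal{H}$, Cauchy--Schwarz gives $|\langle\Psi,U(y)(dX^\mu)^2\Phi\rangle|\le\|\Psi\|\,\|(dX^\mu)^2\Phi\|$, a finite $y$-independent number which I take as $D^\mu_{0,0}$. For the derivatives, $\partial_x^\gamma$ acts only on the exponential, reproducing it up to a constant polynomial in the entries of $\Theta$, while $\partial_{y_\nu}U(y)=iU(y)X^\nu$ with the mutually commuting $X^\nu$ preserving $\mathscr{S}(\mathbb{R}^d)$; hence $\partial_x^\gamma\partial_y^\kappa b^\mu(x,y)$ equals a constant times $e^{-2a^\mu(\Theta x)_\mu}\langle\Psi,U(y)(iX)^\kappa(dX^\mu)^2\Phi\rangle$, and Cauchy--Schwarz again bounds it by $D^\mu_{\gamma,\kappa}\,e^{-2a^\mu(\Theta x)_\mu}$ with $D^\mu_{\gamma,\kappa}<\infty$, giving the two displayed estimates.

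Finally, well-definedness of $(dX^\mu)^2_\Theta$ on $\mathscr{S}(\mathbb{R}^d)$ would follow from the standard oscillatory-integral argument: using $(1+|x|^2)^{-1}(1+ix\cdot\partial_y)\,e^{-ixy}=e^{-ixy}$ and $(1+|y|^2)^{-1}(1+iy\cdot\partial_x)\,e^{-ixy}=e^{-ixy}$, one integrates by parts to transfer arbitrarily many factors $(1+|x|^2)^{-1}$ and $(1+|y|^2)^{-1}$ onto the phase, the derivatives falling on $\chi(\epsilon x,\epsilon y)\,b^\mu(x,y)$; together with the rapid decay of $\chi$ and the amplitude estimates above, this would produce a dominating function integrable uniformly in $\epsilon$, so dominated convergence gives the existence and $\chi$-independence of the $\epsilon\to0$ limit. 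I expect the main obstacle to be precisely this step: unlike the bounded case of \cite{BLS}, the amplitude is not a polynomially bounded symbol but carries the exponential $e^{-2a^\mu(\Theta x)_\mu}$, which the polynomial gain from integration by parts cannot by itself tame. The mechanism one must exploit is that the $y$-integration of the dilatation matrix coefficient $\langle\Psi,U(y)(dX^\mu)^2\Phi\rangle$ is expected to supply enough decay in $x$ to compensate the exponential weight; establishing that this compensation genuinely holds, rather than the algebra leading to the bound, is where the real work lies.
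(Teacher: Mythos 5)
Your derivation of the two displayed estimates is correct and follows essentially the same route as the paper, only more explicitly: you reduce $dX^{\mu}$ to the constant multiple $a^{\mu}q_{\mu+1}P^{\mu+1}$ of a single momentum operator and get $\alpha_{\Theta x}\bigl((dX^{\mu})^2\bigr)=e^{-2a^{\mu}(\Theta x)_{\mu}}(dX^{\mu})^2$, which is what the paper obtains from the Baker--Hausdorff expansion of the commutation relations; the Cauchy--Schwarz bounds on $b^{\mu}$ and its $x$-, $y$-derivatives then coincide with the paper's estimates (your variant, which puts $(iX)^{\kappa}$ on $\Phi$ instead of $(-iX)^{\kappa}$ on $\Psi$, even yields the derivative bounds for arbitrary $\Psi\in\mathcal{H}$, slightly more than the paper states).

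The gap is in the final claim of the lemma, and you have in effect named it yourself. The paper does not use the integration-by-parts machinery you propose; as you correctly observe, the polynomial gain from $(1+|x|^2)^{-1}(1+ix\cdot\partial_y)$ cannot tame an amplitude that grows exponentially in $x$, so that route fails outright. What the paper does instead is exploit that its amplitude bound $D^{\mu}_{\gamma,\kappa}\,e^{-2a^0(\Theta x)_0}$ is uniform in $y$: it chooses a factorized regulator $\chi(\epsilon x,\epsilon y)=\chi_1(\varepsilon_1 x)\,\chi_2(\varepsilon_2 y)$, performs the $y$-integration first, so that
\begin{equation*}
\int dy\; e^{-ixy}\,\chi_2(\varepsilon_2 y)\;\longrightarrow\;(2\pi)^{d/2}\,\delta(x)\qquad(\varepsilon_2\to 0),
\end{equation*}
and then the remaining $x$-integral is localized at $x=0$, where the exponential weight equals $1$, giving the finite value $D^{0}_{\gamma,\kappa}$. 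This is precisely the compensation mechanism you anticipated (``the $y$-integration supplies enough decay in $x$''), but the paper implements it by localization through a delta distribution rather than by extracting integrable decay, and the same device reappears in the paper's uniqueness proposition for the Rieffel product. So your proposal establishes the two estimates but leaves the well-definedness claim open; to close it along the paper's lines you must replace the integration-by-parts step by this factorized-cutoff/delta-function computation. (Be aware that the paper's own step is itself formal: it inserts the absolute-value bound of the amplitude while retaining the oscillating phase, which is not a literal inequality between integrals, so a fully rigorous closure would still require justifying the interchange of limits behind that manipulation.)
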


\begin{proof}
	In order to prove that the oscillatory integral is bounded, we have to investigate   derivatives of arbitrary order of the  scalar product $b(x,y)$ and prove its boundedness. To achieve this objective we first have to calculate the adjoint action  $\alpha_{\Theta x}$ of the representation of the differential $dX$. Let us recall the algebra,
	\begin{align*}
	[X^0,dX^0]= ia^0dX^0, \qquad\qquad  [X^i,dX^i]= ia^idX^i 
	\end{align*}
	with all other commutator relations being zero. Hence  the explicit adjoint action of the zero component  $dX^0$ is given, by using the Backer Hausdorff Formula, as follows, 
	\begin{align}\nonumber 	\alpha_{\Theta x}(dX^0)&= dX^0+i(\Theta x)_0	[X^0,dX^0]+\cdots\\\nonumber
	&= dX^0+i(\Theta x)_0\,	ia^0dX^0+i^2(\Theta x)_0^2	(ia^0)^2 dX^0+\cdots\\\nonumber
	&= 	dX^0+\sum_{n=1}^{\infty}\frac{i^{2n}}{n!}(a^0)^n(\Theta x)_0^ndX^0\\&\label{ad1}=
	e^{-a^0(\Theta x)_0}dX^0,
	\end{align}
	where in the last lines we used the commutator relations and the notation $(\Theta x)_0=\Theta_{0i}x^i$.
	By using the unitary operators from which the adjoint action is composed of we obtain,
	\begin{align}\nonumber
	\alpha_{\Theta x}(dX^0)^2=(\alpha_{\Theta x}(dX^0))^2=
	e^{-2a^0(\Theta x)_0}(dX^0)^2.
		\end{align}
	 For the spatial components an equivalent calculation can be made and we obtain, 	 
	\begin{align*}
	\alpha_{\Theta x}(dX^i)^2&= 
	e^{-2a_i(\Theta x)_i}(dX^i)^2.
	\end{align*}
	 Turning to the derivatives of   $b^{\mu}(x,y)$ we have for $\mu=0$ the following estimates
		\begin{align*} \partial_{x}^{\gamma}\partial_{y}^{\kappa}
		{\langle \Psi, 
			U(y)\alpha_{\Theta x}(dX^{0})^2   {\Phi}\rangle}&\leq \| (-iX)^{\kappa} \Psi\| \| 
		\partial_{x}^{\gamma}\alpha_{\Theta x}(dX^0)^2 {\Phi}\|\\&\leq 
			\| (-iX)^{\kappa} \Psi\|     \| 
		\partial_{x}^{\gamma}e^{-2a^0(\Theta x)_0}(dX^0)^2 {\Phi}\|\\&\leq
		D^{0}_{\gamma,\,\kappa}\,e^{-2a^0(\Theta x)_0},
		\end{align*}
		where  $\gamma, \kappa$ are multi-indices. In the last lines we  used the explicit adjoint action and the fact that there exists a finite constant $D^{0}_{\gamma,\kappa}$, for all $\gamma$'s and $\kappa$'s, due to the appropriately chosen domains. To prove that the oscillatory integral is finite we use the former inequality, i.e. 
		\begin{align*}
		&(2\pi)^{-d}
		\lim_{\epsilon\rightarrow 0}
		\iint  \, dx\,  dy \, e^{-ixy}  \, \chi(\epsilon x,\epsilon
		y) \,\partial_{x}^{\gamma}\partial_{y}^{\kappa} b^{0}(x,y)
		\\ &\leq(2\pi)^{-d} \,	D^{0}_{\gamma,\,\kappa}
		\lim_{\epsilon\rightarrow 0}
		\iint  \, dx\,  dy \, e^{-ixy}  \, \chi(\epsilon x,\epsilon
		y) \,e^{-2a^0(\Theta x)_0}
		\\&= (2\pi)^{-d} \,	D^{0}_{\gamma,\,\kappa}
		\lim_{\varepsilon_1\rightarrow 0}  \left(
		\int dx \lim_{\varepsilon_2\rightarrow 0} 
		\left(\int dy  e^{-ixy}
		\chi_2(\varepsilon_2 y)\right)\,\chi_1(\varepsilon_1  x)\,e^{-2a^0(\Theta x)_0}
		\right) 
		\\
		&=       (2\pi)^{-d/2}	D^{0}_{\gamma,\,\kappa}
		\lim_{\varepsilon_1\rightarrow 0}  \left(
		\int dx \,
		\delta( {x} )\,\chi_1(\varepsilon_1  x) \,e^{-2a^0(\Theta x)_0} \right) =	D^{0}_{\gamma,\,\kappa},
		\end{align*}
	The oscillatory integral does not depend on the chosen cut-off function. Hence, we choose  the regulator as   $\chi (\varepsilon x,\varepsilon y)= \chi_2(\varepsilon_2 x
		)\chi_1(\varepsilon_1 y)$ 
		with $\chi  \in\mathscr{S}(\mathbb{R}^{d}\times\mathbb{R}^{d})$ and $\chi_{l}(0 )=1$, $l=1$, $2$,
		and we obtain a delta
		distribution $\delta(x)$ in the limit $\varepsilon_2 \rightarrow
		0$, \cite[Section 7.8, Equation 7.8.5]{H}.    Moreover, the resulting integral converges and can be seen by using for example the regulator function $\chi_2(\varepsilon_2 x)=c_1 e^{-\varepsilon_2 x^2}$. Similar considerations can be done for the spatial components.

\end{proof}
 
\begin{lemma}\label{defalg}
	By using the representation of differential algebra (given in Lemma \ref{ex1}) we obtain a well-defined, deformed differential operator $(dX^{\mu})^2_{\,\,\Theta}$ on $\mathscr{S}(\mathbb{R}^d)$ and it is explicitly given by
	\begin{equation}
	(dX^{\mu})^2_{\,\,\Theta}=e^{-2a_{\mu}(\Theta X)_{\mu}}(dX^{\mu})^2.
	\end{equation}
	
\end{lemma}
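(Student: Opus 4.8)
The well-definedness of the operator on $\mathscr{S}(\mathbb{R}^d)$ has already been secured by the boundedness estimates of Lemma \ref{l3}, so the only task left is to evaluate the warped convolution in closed form. The plan is to use the first (spectral) form of the deformation in Definition \ref{defwca}, namely $A_{\Theta}=\int dE(x)\,\alpha_{\Theta x}(A)$, and to feed into it the explicit adjoint action computed in the proof of Lemma \ref{l3}. Concretely, I would begin from
\begin{equation*}
(dX^{\mu})^2_{\,\,\Theta}=\int dE(x)\,\alpha_{\Theta x}\big((dX^{\mu})^2\big),
\end{equation*}
which is meaningful on $\mathscr{S}(\mathbb{R}^d)$ precisely because the associated oscillatory integral was shown to converge in Lemma \ref{l3}.

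Next I would substitute the adjoint action $\alpha_{\Theta x}\big((dX^{\mu})^2\big)=e^{-2a_{\mu}(\Theta x)_{\mu}}(dX^{\mu})^2$ established in the proof of Lemma \ref{l3} via the Baker--Campbell--Hausdorff expansion. The decisive observation is that $e^{-2a_{\mu}(\Theta x)_{\mu}}$ is a scalar function of the spectral variable $x$, whereas $(dX^{\mu})^2$ does not depend on $x$; the latter may therefore be pulled out of the spectral integral to the right, leaving
\begin{equation*}
(dX^{\mu})^2_{\,\,\Theta}=\left(\int e^{-2a_{\mu}(\Theta x)_{\mu}}\,dE(x)\right)(dX^{\mu})^2.
\end{equation*}
Since $E$ is the joint spectral resolution of the mutually commuting self-adjoint generators $X^{0},\dots,X^{n}$ of $U$ (Proposition \ref{ur1}), and since $(\Theta X)_{\mu}=\Theta_{\mu\nu}X^{\nu}$ is again self-adjoint because $\Theta$ is real and skew-symmetric, the spectral calculus identifies the bracket with $e^{-2a_{\mu}(\Theta X)_{\mu}}$, which yields the claimed identity with the exponential on the left.

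Alternatively, to stay entirely within the rigorous oscillatory-integral framework, I would insert the same adjoint action into the regularized integral and choose the factorized regulator $\chi(\epsilon x,\epsilon y)=\chi_2(\epsilon_2 x)\chi_1(\epsilon_1 y)$ exactly as in Lemma \ref{l3}. Performing the $y$-integration first, $(2\pi)^{-d}\int dy\,e^{-ixy}\,\chi_1(\epsilon_1 y)\,U(y)$ converges in the limit $\epsilon_1\to 0$ to the spectral density of $E$, after which the remaining $x$-integral against $e^{-2a_{\mu}(\Theta x)_{\mu}}$ reproduces the functional calculus of the previous paragraph. Both routes therefore lead to the same closed expression.

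The step requiring most care is the operator ordering together with the passage from the regularized integral to the closed form. One must keep in mind that $X^{\nu}$ and $dX^{\mu}$ do not commute, so the position of the scalar exponential relative to $(dX^{\mu})^2$ — and hence the fact that $e^{-2a_{\mu}(\Theta X)_{\mu}}$ appears on the left — is dictated by the ordering of $dE(x)$ and $(dX^{\mu})^2$ inherited from Definition \ref{defwca} and Lemma \ref{l3}. Moreover, because $e^{-2a_{\mu}(\Theta X)_{\mu}}$ is an unbounded operator, I would verify explicitly that pulling the $x$-independent operator out of the integral and applying the delta distribution produced by the $y$-integration are legitimate on the stable domain $\mathscr{S}(\mathbb{R}^d)$; this is exactly what the uniform bounds $D^{\mu}_{\gamma,\kappa}$ of Lemma \ref{l3} guarantee.
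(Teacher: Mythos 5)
Your proposal is correct and follows essentially the same route as the paper: invoke the well-definedness from Lemma \ref{l3}, write the warped convolution in its spectral form $\int dE(x)\,\alpha_{\Theta x}\bigl((dX^{\mu})^2\bigr)$, insert the explicit adjoint action $e^{-2a_{\mu}(\Theta x)_{\mu}}(dX^{\mu})^2$, and evaluate the remaining scalar integral by the functional calculus to obtain $e^{-2a_{\mu}(\Theta X)_{\mu}}(dX^{\mu})^2$. The additional oscillatory-integral verification and the ordering remarks you include are sound but go beyond what the paper's own (shorter) proof records.
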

\begin{proof}
	Since we have proven that deforming the differential structure is a well defined expression, we use the spectral representation in order to obtain the result of deformation. Hence for $\mu=0$ we have the following expression
	
		\begin{align*} (dX^{0})^2_{\,\,\Theta}&=
		\int dE(x) \, \alpha_{\Theta x}(   dX ^{0})^2  \\&=
			\int dE(x) \, e^{-2a^0(\Theta x)_0}(   dX ^{0})^2	\\&=
			e^{-2a^0(\Theta X)_0}(   dX ^{0})^2 ,
		\end{align*} 
		where we used the explicit adjoint action (\ref{ad1}). For the spatial components, i.e. $\mu=i$, we have 
		\begin{align*} (   dX ^{i})^2_{\,\,\Theta}&= 
		\int dE(x) \, \alpha_{\Theta x}(   dX ^{i})^2  \\&=
		\int dE(x) \, e^{-2a_i(\Theta x)_i}(   dX ^{i})^2 	\\&=
		e^{-2a_i(\Theta X)_i}(   dX ^{i})^2 .
		\end{align*} 
\end{proof} 
\subsection{Self-Adjointness}
The flat metric consists of one-forms that are represented as self-adjoint operators. In this context it is important to investigate if the property of self-adjointness is kept post deformation. From a quantum mechanical point of view,   the re-expression of the metric as a self-adjoint entity has to be taken. The reason therein lies in the fact that we want to connect geometrical expressions with quantum mechanical  (QM) observables. Quantum mechanical observables, on the other hand,  have to be self-adjoint in order have real eigenvalues.  Hence, in this section we investigate if the deformed one-forms, i.e. the deformed differentials, are self-adjoint. 
\\\\
In order to investigate self-adjointness of the deformed differentials we give the following Lemma.  
\begin{lemma}\label{wcl2}
	Let $\Theta$ be a real skew symmetric matrix on $\mathbb{R}^d$, ${A}$   a densely defined operator on a Hilbert space $\mathscr{H}$ such that the deformation, i.e. $A_{\Theta}$, is well-defined on a dense domain $\mathcal{D}$.
	Then the following relations hold  on  $\mathcal{D}$, \\ \begin{enumerate} \renewcommand{\labelenumi}{(\roman{enumi})}
		\item   $\int \alpha_{\Theta x}(A)dE(x)=\int dE(x)\alpha_{\Theta x}(A)$
		\item $\left(\int \alpha_{\Theta x}(A)dE(x)\right)^{*}\subset\int \alpha_{\Theta
			x}(A^{*})dE(x)$
	\end{enumerate}
\end{lemma}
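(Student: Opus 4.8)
**The plan is to verify both identities by manipulating the defining oscillatory-integral expression of $A_\Theta$ and using the spectral properties of the unitary group $U$.**

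First I would establish (i). The key observation is that the warped convolution is built from the spectral resolution $E$ of the generator of $U(y)$, and that the adjoint action $\alpha_{\Theta x}$ commutes with the spectral projections in the relevant sense. Concretely, I would start from the defining formula
\begin{equation*}
A_\Theta = (2\pi)^{-d}\lim_{\epsilon\to 0}\iint dx\,dy\,\chi(\epsilon x,\epsilon y)\,e^{-ixy}\,\alpha_{\Theta x}(A)\,U(y),
\end{equation*}
and use that $U(y)=\int e^{iyp}\,dE(p)$ so that the $y$-integration against $e^{-ixy}$ produces (in the oscillatory-integral sense) the spectral measure $dE(x)$ evaluated at the translation argument. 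The content of (i) is then that the operator-ordering of $dE(x)$ relative to $\alpha_{\Theta x}(A)$ is immaterial on $\mathcal{D}$: since $\alpha_{\Theta x}(A)=U(\Theta x)AU(\Theta x)^{-1}$ and $U(\Theta x)$ is a function of the same spectral resolution $E$, the factors $dE(x)$ and $\alpha_{\Theta x}(A)$ commute after the spectral representation is inserted. I would make this rigorous by pairing both sides with vectors $\Psi,\Phi\in\mathcal{D}$, reducing each to a convergent oscillatory integral (whose well-definedness is guaranteed by the hypothesis that $A_\Theta$ exists on $\mathcal{D}$, together with the estimates of Lemma \ref{l3}), and checking the two scalar-valued integrands coincide.

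For (ii) I would compute the adjoint directly from the sesquilinear form. For $\Psi,\Phi\in\mathcal{D}$,
\begin{equation*}
\left\langle \Psi,\Big(\int \alpha_{\Theta x}(A)\,dE(x)\Big)\Phi\right\rangle
=\overline{\left\langle \Phi,\int dE(x)\,\alpha_{\Theta x}(A)^{*}\,\Psi\right\rangle},
\end{equation*}
where I would move the adjoint inside using that $E(x)$ is a projection-valued measure (so $(dE(x))^{*}=dE(x)$) and that $\alpha_{\Theta x}$, being conjugation by a unitary, satisfies $\alpha_{\Theta x}(A)^{*}\subset\alpha_{\Theta x}(A^{*})$ on the appropriate domain. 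Applying (i) to reorder $dE(x)$ and $\alpha_{\Theta x}(A^{*})$ then identifies the right-hand side with $\int\alpha_{\Theta x}(A^{*})\,dE(x)$ acting on $\Psi$. The inclusion (rather than equality) of domains is exactly what the $\subset$ in the statement records: the adjoint of the left-hand operator is \emph{a priori} defined on a possibly larger domain than $\mathcal{D}$, and all we can assert is that it agrees with the warped integral of $A^{*}$ on the common dense domain.

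The main obstacle I expect is \emph{not} the formal algebra but the domain bookkeeping in the unbounded setting. Because $A=(dX^\mu)^2$ and the $\alpha_{\Theta x}(A)$ are unbounded, the interchange of adjoint with the spectral integral, and the commutation in (i), must be justified as identities of sesquilinear forms on $\mathcal{D}=\mathscr{S}(\mathbb{R}^d)$ rather than as operator identities; I would lean on the boundedness and rapid-decay estimates for the integrand already proved in Lemma \ref{l3} to guarantee that every oscillatory integral I write down converges absolutely after the cut-offs are inserted, so that Fubini and the spectral calculus apply before taking $\epsilon\to 0$. The careful point is that the unitarity of $U$ makes $\alpha_{\Theta x}$ preserve the stable smooth domain $\mathcal{D}^{\infty}(dX)$, which is what keeps both sides well-defined throughout the manipulation.
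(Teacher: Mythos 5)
Your reduction of (ii) to (i) is structurally sound and in fact mirrors the paper, which also derives item (ii) from item (i); note that conjugation by a unitary even gives the exact identity $\bigl(\alpha_{\Theta x}(A)\bigr)^{*}=\alpha_{\Theta x}(A^{*})$ on the natural domain, so that half of your argument is fine. The genuine gap is in your proof of (i). You justify the reordering by claiming that since $\alpha_{\Theta x}(A)=U(\Theta x)AU(\Theta x)^{-1}$ and $U(\Theta x)$ is a function of the spectral resolution $E$, ``the factors $dE(x)$ and $\alpha_{\Theta x}(A)$ commute after the spectral representation is inserted.'' This is false: only $U(\Theta x)$ commutes with $E$; the operator $A$ sandwiched in the middle does not, and it cannot in any case of interest --- if $A$ commuted with all spectral projections of $E$, then $\alpha_{\Theta x}(A)=A$ and $A_{\Theta}=A$, i.e.\ the entire deformation would be trivial. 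In the concrete situation of this paper, $A=(dX^{\mu})^{2}$ and $E$ is the joint spectral resolution of the $X^{\nu}$, which satisfy $[X^{\nu},dX^{\mu}]=ia^{\nu}\delta^{\nu\mu}dX^{\mu}\neq 0$. Consequently your plan to ``check the two scalar-valued integrands coincide'' cannot succeed: the integrands $\langle\Psi,\alpha_{\Theta x}(A)U(y)\Phi\rangle$ and $\langle\Psi,U(y)\alpha_{\Theta x}(A)\Phi\rangle$ differ pointwise in $(x,y)$ by a nonvanishing commutator term; it is only the oscillatory \emph{integrals} that agree, and that equality is precisely the nontrivial content of (i).

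A correct argument must use the skew-symmetry of $\Theta$, which never enters your proposal --- a red flag, since on the spectral diagonal the two orderings pick up opposite phases $e^{\pm ix\cdot\Theta x}$, and these cancel only because $x\cdot\Theta x=0$. Concretely, one either (a) uses the exchange relation $U(y)\alpha_{\Theta x}(A)=\alpha_{\Theta x+y}(A)U(y)$ followed by a change of variables in the oscillatory integral whose Jacobian and phase bookkeeping rely on skew-symmetry, or (b) follows the route the paper actually takes: it cites \cite{BLS}, where (i) is proved for smooth elements of a $C^{*}$-algebra by expressing the unitaries and the spectral measure as strong limits (Riemann sums, where $E(\Delta_j)U(\Theta x_j)\approx E(\Delta_j)$ by skew-symmetry), and then argues that, under the standing hypothesis that $A_{\Theta}$ is well defined on the dense domain $\mathcal{D}$, those manipulations remain valid for the unbounded operators at hand. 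Your domain discussion in the last paragraph is sensible but does not repair this: the issue is not bookkeeping of domains but the missing mechanism that makes the reordering true.
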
 
\begin{proof}
	For a subset of bounded operators, namely operators belonging to a $C^*$-algebra which are smooth w.r.t. the adjoint action, the proof can be found in \cite{BLS}. However,  for a particular  subset of unbounded operators an equivalent statement holds. Since, we assume that the deformed (unbounded) operators are well-defined item $(i)$, follows by the same proof as in \cite{BLS}. In particular, the proof is done by expressing the unitary operators and the spectral measure in terms of strong limits. This is all well-defined since   the expressions are, for this particular case, all bounded on the  dense domain $\mathcal{D}$. Hence, the proof  is analog and equivalent statements hold as in \cite{BLS}.
	To prove item $(ii)$ we use as in the the original work item $(i)$.
\end{proof}
By using the former results we obtain the following.
\begin{lemma}\label{l5}
	The deformed one-forms, that are given in Lemma \ref{defalg} according to
		\begin{equation}
		(dX^{\mu})^2_{\,\,\Theta}=e^{-2a_{\mu}(\Theta X)_{\mu}}(dX^{\mu})^2,
		\end{equation}
	 are self-adjoint on the dense domain $\mathscr{S}(\mathbb{R}^d)$.
\end{lemma}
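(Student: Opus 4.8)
The plan is to work directly with the explicit form established in Lemma \ref{defalg}, namely $(dX^{\mu})^2_{\,\,\Theta}=e^{-2a_{\mu}(\Theta X)_{\mu}}(dX^{\mu})^2$, and to exhibit it as a product of two self-adjoint operators that commute, so that self-adjointness of the product follows. First I would check that each factor is self-adjoint on $\mathscr{S}(\mathbb{R}^d)$. From the representation of Lemma \ref{ex1} one finds that $dX^{\mu}$ is a real multiple of the single momentum component $P^{\mu+1}$, so $(dX^{\mu})^2$ is, up to a positive real constant, $(P^{\mu+1})^2$; this is a hermitian polynomial in the Heisenberg generators and hence essentially self-adjoint on the Schwartz space by the same argument from \cite{P} already invoked in Lemma \ref{ex1}. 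The exponent $-2a_{\mu}(\Theta X)_{\mu}=-2a_{\mu}\Theta_{\mu\nu}X^{\nu}$ is a real linear combination of the mutually commuting self-adjoint dilatation operators $X^{\nu}$, so $e^{-2a_{\mu}(\Theta X)_{\mu}}$ is a positive self-adjoint operator defined through the joint spectral calculus of the $X^{\nu}$.

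The structural observation I would isolate next is that the two factors commute. Because $\Theta$ is skew-symmetric one has $\Theta_{\mu\mu}=0$, so the exponent $(\Theta X)_{\mu}$ involves only the operators $X^{\nu}$ with $\nu\neq\mu$. Using the defining relation $[X^{\nu},dX^{\mu}]=ia^{\nu}\delta^{\nu\mu}dX^{\mu}$ one computes $[(\Theta X)_{\mu},dX^{\mu}]=\sum_{\nu}\Theta_{\mu\nu}\,ia^{\nu}\delta^{\nu\mu}\,dX^{\mu}=\Theta_{\mu\mu}\,ia^{\mu}dX^{\mu}=0$. Equivalently, in the Schrödinger picture $(dX^{\mu})^2\propto(P^{\mu+1})^2$ acts only on the $(\mu+1)$-th tensor factor of $\mathscr{L}^2(\mathbb{R}^d)$, whereas $e^{-2a_{\mu}(\Theta X)_{\mu}}$ is built solely from operators on the remaining factors. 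This tensor-factor decomposition is what upgrades the algebraic commutator identity to genuine strong commutativity, with $\mathscr{S}(\mathbb{R}^d)$ serving as a common invariant core.

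With commutativity in hand I would conclude self-adjointness by combining symmetry with the adjoint inclusion from Lemma \ref{wcl2}. Writing $B=e^{-2a_{\mu}(\Theta X)_{\mu}}$ and $C=(dX^{\mu})^2$, symmetry on $\mathscr{S}(\mathbb{R}^d)$ is immediate: for $\phi,\psi\in\mathscr{S}(\mathbb{R}^d)$ one has $\langle\phi,BC\psi\rangle=\langle B\phi,C\psi\rangle=\langle CB\phi,\psi\rangle=\langle BC\phi,\psi\rangle$, using the self-adjointness of $B,C$ and $BC=CB$. For the reverse inclusion I would apply Lemma \ref{wcl2}(ii) with $A:=(dX^{\mu})^2$ self-adjoint, which gives $(A_{\Theta})^{*}\subset\int\alpha_{\Theta x}(A^{*})\,dE(x)=\int\alpha_{\Theta x}(A)\,dE(x)=A_{\Theta}$, the last equality being Lemma \ref{wcl2}(i). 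Together, $A_{\Theta}\subset(A_{\Theta})^{*}\subset A_{\Theta}$ forces $A_{\Theta}=(A_{\Theta})^{*}$. Alternatively, strong commutativity alone allows the product to be read off the joint spectral measure of $B$ and $C$ as a real function of commuting self-adjoint operators, hence self-adjoint on its natural domain with $\mathscr{S}(\mathbb{R}^d)$ a core.

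I expect the main obstacle to be exactly the passage from ``$B$ and $C$ commute on a dense domain'' to ``their product is self-adjoint,'' since $B$ is genuinely unbounded (the dilatation spectrum fills $\mathbb{R}$) and a product of two symmetric operators that merely commute on a common domain need not even be essentially self-adjoint. The care required is to justify strong commutativity rather than formal commutation, for which the tensor-factor structure of the Schrödinger representation is the cleanest device, and to verify that $\mathscr{S}(\mathbb{R}^d)$ is simultaneously a core for both factors and stable under the bounded spectral projections entering the argument.
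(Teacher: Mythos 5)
Your route is genuinely different from the paper's. The paper proves symmetry by appealing to Lemma \ref{wcl2}(ii) and then gets self-adjointness from Nelson's analytic vector theorem \cite[Theorem X.39]{RS2}, asserting that Schwartz vectors are analytic for the representations of the Heisenberg--Weyl generators. You instead exploit the structure of the explicit formula: $dX^{\mu}=q_{\mu+1}a^{\mu}P^{\mu+1}$ (this identification is correct), so $(dX^{\mu})^2$ lives in one tensor factor of $\mathscr{L}^2(\mathbb{R}^d)$, while skew-symmetry of $\Theta$ forces $(\Theta X)_{\mu}$ to involve only the dilatation operators $X^{\nu}$, $\nu\neq\mu$, acting on the \emph{other} factors; hence the two factors strongly commute and the product can be treated by joint spectral calculus. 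Your closing argument via Lemma \ref{wcl2} is also logically tighter than the paper's: the paper extracts only the inclusion $(A_{\Theta})^{*}\subset A_{\Theta}$ from \ref{wcl2}(ii) and is vague about the reverse one, whereas you combine a direct symmetry computation ($A_{\Theta}\subset (A_{\Theta})^{*}$) with \ref{wcl2}(i)--(ii) to force equality. What the paper's route would buy, if its analytic-vector claim were substantiated, is independence from the factorized form, so it could in principle cover deformations that do not split into commuting factors.

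There is, however, a genuine gap at exactly the point you flag, and it is fatal to the domain claim in your proof (and, to be fair, to the paper's proof and even to the lemma as stated): $\mathscr{S}(\mathbb{R}^d)$ is \emph{not} contained in $\mathcal{D}\bigl(e^{-2a_{\mu}(\Theta X)_{\mu}}\bigr)$, so it can be neither invariant under that factor nor a common core. Concretely, each $X^{\nu}$ is a dilatation generator with purely absolutely continuous spectrum equal to all of $\mathbb{R}$; in the logarithmic variable $u=\log|q_{\nu+1}|$ it becomes $-i\partial_u$, and membership of a vector in $\mathcal{D}(e^{-tX^{\nu}})$, $t\neq 0$, requires $\int e^{-2t\lambda}|\hat f(\lambda)|^2\,d\lambda<\infty$, i.e.\ one-sided exponential decay of the Fourier transform, which by Paley--Wiener forces analytic continuation into a strip. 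A nonzero compactly supported smooth function can never satisfy this, so $\mathscr{S}(\mathbb{R}^d)\not\subset\mathcal{D}(B)$ with $B=e^{-2a_{\mu}(\Theta X)_{\mu}}$; in particular your symmetry computation $\langle\phi,BC\psi\rangle=\langle B\phi,C\psi\rangle$ is undefined for general Schwartz $\phi,\psi$. The same fact invalidates the paper's assertion that Schwartz vectors are analytic vectors (they are not analytic even for a single dilatation or momentum operator unless they extend analytically). The repair is to shrink the domain: take, e.g., the dense set spanned by vectors that are Schwartz in the variable $q_{\mu+1}$ and have compact spectral support with respect to the commuting family $\{X^{\nu}\}_{\nu\neq\mu}$ in the remaining variables. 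This set is invariant under both factors, $B$ acts on it as a bounded operator after each spectral truncation, and your strong-commutativity argument then does deliver essential self-adjointness of $(dX^{\mu})^2_{\,\,\Theta}$ there — so your strategy survives, but on a smaller domain than the one claimed in the lemma.
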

\begin{proof}
	Symmetry of the deformed differentials follows from Lemma \ref{wcl2}, $(ii)$ since we started with self-adjoint operators.	The  proof for self-adjointness is done by using the properties of the representations  and the  dense domain $\mathscr{S}(\mathbb{R}^d)$. In particular, the vectors of the Schwartz space are analytic w.r.t. the representations of infinitesimal generators of the Heisenberg-Weyl group. Hence, the deformed symmetric operator has a total set of analytic vectors from which self-adjointness follows, \cite[Theorem X.39]{RS2}.  
\end{proof}
As already explained in the beginning of this Section, the terms emerging from deformation are pulled to the flat metric. Hence, the new metric that emerged from deformation is by the same arguments as before, self-adjoint.  
\subsection{Uniqueness}
The reader familiar with Rieffel deformations and warped convolutions might have, rightfully, the  question of uniqueness in  mind. In particular, by uniqueness, we mean  the possibility of defining the deformation as the product of two deformed differentials instead of taking the deformation of the product of   differentials, which, in general, is different. 
In this context we introduce the well-known  Rieffel product (see \cite{R}) in order to prove uniqueness of our deformation scheme in our notation (see \cite{BLS} as well). 
\begin{lemma}\label{l2.1}
	Let $\Theta$ be a real skew-symmetric matrix on $\mathbb{R}^d$ and let $  {A},   {B}$ be  
	densely defined operators on a Hilbert space $\mathscr{H}$ such that their respective deformations are well-defined on a dense domain $\mathcal{D}$. Then,
	\begin{equation*}
	{A}_{\Theta}   {B}_{\Theta} = (A\times_{\Theta }B)_{\Theta} , 
	\end{equation*}
	where $\times_{\Theta}$ is the \textbf{Rieffel product} defined by 
	\begin{equation}\label{dp0}
	(A\times_{\Theta}B ) =(2\pi)^{-d}
	\lim_{\epsilon\rightarrow 0}\iint dx\, dy \,\chi(\epsilon x,\epsilon y )\,e^{-ixy} \, \alpha_{\Theta x}(A)\alpha_{y}(B) .
	\end{equation}
\end{lemma}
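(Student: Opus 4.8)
The plan is to prove the identity by a direct computation starting from the oscillatory-integral definition of the warped convolution (Equation \eqref{WC1}), following the bounded-operator argument of \cite{BLS} and \cite{R} and then verifying that every step survives on the dense domain $\mathcal{D}$, where by hypothesis both $A_{\Theta}$ and $B_{\Theta}$ are well-defined. First I would expand the left-hand side as a product of two oscillatory integrals,
\begin{equation*}
A_{\Theta}B_{\Theta} = (2\pi)^{-2d}\lim_{\epsilon\to 0}\iiiint dx\, dy\, dx'\, dy'\,\chi(\epsilon x,\epsilon y)\chi(\epsilon x',\epsilon y')\,e^{-ixy-ix'y'}\,\alpha_{\Theta x}(A)\,U(y)\,\alpha_{\Theta x'}(B)\,U(y').
\end{equation*}

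The central algebraic input is that $\alpha$ is implemented by $U$ together with the group law $U(y)U(y')=U(y+y')$. Since $\alpha_p(\cdot)=U(p)(\cdot)U(p)^{-1}$ one has the intertwining relation $U(y)\,\alpha_{\Theta x'}(B)=\alpha_{y+\Theta x'}(B)\,U(y)$, which lets me push both unitaries to the right and collapse them into a single $U(y+y')$, bringing the integrand to the form $\alpha_{\Theta x}(A)\,\alpha_{y+\Theta x'}(B)\,U(y+y')$ with phase $e^{-ixy-ix'y'}$. In parallel I would write out the right-hand side by inserting the Rieffel product \eqref{dp0} into \eqref{WC1}, which yields a quadruple integral whose integrand is $\alpha_{\Theta(k+u)}(A)\,\alpha_{\Theta k+v}(B)\,U(l)$ with phase $e^{-ikl-iuv}$.

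The two quadruple integrals are then matched by the linear substitution $k=x'$, $u=x-x'$, $v=y$, $l=y+y'$. Under this change $\alpha_{\Theta(k+u)}=\alpha_{\Theta x}$, $\alpha_{\Theta k+v}=\alpha_{\Theta x'+y}$ and $U(l)=U(y+y')$, while the phases agree because $-kl-uv=-x'(y+y')-(x-x')y=-xy-x'y'$. The substitution is unimodular (its Jacobian has absolute value one), so the integration measure is preserved, and the product of cut-offs $\chi(\epsilon x,\epsilon y)\chi(\epsilon x',\epsilon y')$ is carried into an admissible regularizing family that still equals $1$ at the origin; invoking the independence of the oscillatory integral from the chosen cut-off, as already exploited in the proof of Lemma \ref{l3}, identifies the result with $(A\times_{\Theta}B)_{\Theta}$.

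The step I expect to be the main obstacle is not the change of variables but its justification in the unbounded setting. For operators belonging to a $C^{*}$-algebra the interchange of the two $\epsilon$-limits, the use of Fubini, and the substitution inside the oscillatory integral are all controlled exactly as in \cite{BLS}; here I must instead appeal to the well-definedness hypothesis on $\mathcal{D}$ and to rapid-decay estimates of the type established in Lemma \ref{l3}, which ensure that, when applied to vectors of $\mathcal{D}$, the relevant integrands are absolutely integrable after regularization. This is precisely what legitimizes treating $\chi(\epsilon x,\epsilon y)\chi(\epsilon x',\epsilon y')$ as a single admissible cut-off and exchanging the limits, so that the formal manipulation above becomes rigorous on $\mathcal{D}$.
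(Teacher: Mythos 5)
Your proposal takes a genuinely different route from the paper, for a simple reason: the paper does not prove Lemma \ref{l2.1} at all. It introduces the Rieffel product as ``well-known'', cites \cite{R} and \cite{BLS}, and states the lemma without any proof; the only thing actually proved in that subsection is the subsequent proposition that $dX^{\mu}\times_{\Theta}dX^{\mu}=(dX^{\mu})^{2}$ for the concrete algebra, which settles uniqueness. What you do instead is reconstruct the computation behind the cited result of \cite{BLS}: expanding both sides as quadruple oscillatory integrals, using the intertwining relation $U(y)\,\alpha_{\Theta x'}(B)=\alpha_{y+\Theta x'}(B)\,U(y)$ (valid because the translation group is abelian), and matching via the substitution $k=x'$, $u=x-x'$, $v=y$, $l=y+y'$. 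This core is correct: the substitution has unit Jacobian, the phases agree ($-kl-uv=-xy-x'y'$), the integrands coincide, and the transformed product of cut-offs is again Schwartz with value $1$ at the origin, hence admissible. What your approach buys is self-containedness, which the paper lacks; what it costs is that the analytic justification you flag as ``the main obstacle'' is a real one and is not actually supplied by the lemma's hypotheses. Mere well-definedness of $A_{\Theta}$ and $B_{\Theta}$ on $\mathcal{D}$ does not by itself license writing the product of two limits as a single joint oscillatory integral, interchanging the nested $\epsilon$-limits, or guaranteeing that $B_{\Theta}$ maps $\mathcal{D}$ into vectors on which the defining integral of $A_{\Theta}$ converges; one needs in addition domain stability and decay estimates of the kind established in Lemma \ref{l3}, which do hold for the paper's concrete application on $\mathscr{S}(\mathbb{R}^d)$ but are extra assumptions relative to the lemma as stated. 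That weakness, however, is inherited from the paper's own formulation (which asserts the lemma under the same weak hypotheses, proof omitted) rather than introduced by you, and your sketch at least makes explicit where the missing analysis lives.
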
 $\,$\\
Hence,  there are two expressions  worth investigating, i.e. \newline
\begin{itemize} 
	\item    $(dX^{\mu}dX^{\mu})_{\Theta}$\newline
	\item  $dX^{\mu}_{\Theta}dX^{\mu}_{\Theta}=(dX^{\mu}\times_{\Theta} dX^{\mu})_{\Theta}$, 
\end{itemize} $\,$ \newline 
where the deformed product, $\times_{\Theta}$, appears in the second option (see Formula  \ref{dp0}). Since, we investigated the first case and received the well-defined outcome, we investigate in the forthcoming part the second definition.  
\newline 
\begin{proposition}
	The differentials deformed with warped convolutions, i.e. $(dX^{\mu})^2_{\,\,\Theta}$ ,   are   regardless of the possible definitions unique, i.e. the following equivalence holds,
	\begin{align*} 
	(dX^{\mu}dX^{\mu})_{\Theta}=(dX^{\mu}\times_{\Theta} dX^{\mu})_{\Theta}.
	\end{align*} 
	  In particular, on the domain  $\mathscr{S}(\mathbb{R}^d)$, the Rieffel product is equal to the undeformed product,
	\begin{align*} 
	dX^{\mu}\times_{\Theta} dX^{\mu}=(	dX^{\mu} )^2.
	\end{align*}  
\end{proposition}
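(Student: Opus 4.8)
The plan is to show that the whole statement rests on the single identity $dX^{\mu}\times_{\Theta} dX^{\mu}=(dX^{\mu})^2$ and to obtain the first equality as an immediate consequence. Indeed, once the Rieffel product $dX^{\mu}\times_{\Theta} dX^{\mu}$ is shown to agree with the ordinary product $(dX^{\mu})^2 = dX^{\mu}dX^{\mu}$ on $\mathscr{S}(\mathbb{R}^d)$, applying the deformation map $(\cdot)_{\Theta}$ to both sides yields $(dX^{\mu}\times_{\Theta} dX^{\mu})_{\Theta}=(dX^{\mu}dX^{\mu})_{\Theta}$, which is precisely the first asserted equality. Combined with Lemma \ref{l2.1}, which identifies the product of the two deformed differentials $dX^{\mu}_{\Theta}dX^{\mu}_{\Theta}$ with $(dX^{\mu}\times_{\Theta} dX^{\mu})_{\Theta}$, this establishes that the two candidate definitions of the deformed square coincide, which is the content of the uniqueness claim. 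I would therefore devote the proof to the computation of the Rieffel product.

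To evaluate $dX^{\mu}\times_{\Theta} dX^{\mu}$ from Formula (\ref{dp0}) I need the two adjoint actions appearing in the integrand. The first, $\alpha_{\Theta x}$, was already computed in the proof of Lemma \ref{defalg} (see Equation (\ref{ad1})), giving $\alpha_{\Theta x}(dX^{\mu})=e^{-a_{\mu}(\Theta x)_{\mu}}dX^{\mu}$, where $(\Theta x)_{\mu}=\Theta_{\mu\nu}x^{\nu}$. For the second factor I would repeat the same Baker--Campbell--Hausdorff calculation, but now with the undeformed action $\alpha_{y}(dX^{\mu})=U(y)\,dX^{\mu}\,U(y)^{-1}$ generated by $y_{\mu}X^{\mu}$; using $[X^{\mu},dX^{\mu}]=ia_{\mu}dX^{\mu}$ together with the vanishing of all mixed commutators, this gives $\alpha_{y}(dX^{\mu})=e^{-a_{\mu}y_{\mu}}dX^{\mu}$. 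Substituting both into (\ref{dp0}) pulls the operator part out of the integral, leaving $dX^{\mu}\times_{\Theta}dX^{\mu}=I_{\mu}\,(dX^{\mu})^2$ with the scalar oscillatory integral
\begin{equation*}
I_{\mu}=(2\pi)^{-d}\lim_{\epsilon\rightarrow 0}\iint dx\,dy\,\chi(\epsilon x,\epsilon y)\,e^{-ixy}\,e^{-a_{\mu}(\Theta x)_{\mu}}\,e^{-a_{\mu}y_{\mu}}.
\end{equation*}

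I would then evaluate $I_{\mu}$ exactly as in Lemma \ref{l3}, splitting the regulator as $\chi(\epsilon x,\epsilon y)=\chi_{1}(\epsilon_1 x)\chi_{2}(\epsilon_2 y)$ and performing the $y$-integral first. The factor $e^{-a_{\mu}y_{\mu}}$ shifts the argument of the resulting delta distribution, so that $\int dy\,e^{-ixy}e^{-a_{\mu}y_{\mu}}\chi_{2}(\epsilon_2 y)$ localizes $x$ at the origin in every direction except the $\mu$-th, where its argument is shifted to a complex value proportional to $ia_{\mu}$. The decisive observation is that, because $\Theta$ is skew-symmetric, the diagonal entry $\Theta_{\mu\mu}$ vanishes, so $(\Theta x)_{\mu}=\Theta_{\mu\nu}x^{\nu}$ does not involve $x^{\mu}$ at all. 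Evaluated at the localization point (all partners $x^{\nu}$, $\nu\neq\mu$, set to zero) it therefore gives $(\Theta x)_{\mu}=0$ and hence $e^{-a_{\mu}(\Theta x)_{\mu}}=1$. The remaining $x$-integral against the delta distribution then yields $I_{\mu}=1$, so that $dX^{\mu}\times_{\Theta}dX^{\mu}=(dX^{\mu})^2$, and the proposition follows.

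The step I expect to be the main obstacle is the rigorous treatment of the $y$-integral in $I_{\mu}$: the real exponential $e^{-a_{\mu}y_{\mu}}$ turns the naive delta function into one supported at a complex argument, so the manipulation must be justified inside the oscillatory-integral framework rather than as a literal pointwise identity. I would handle this by the same convergence and contour/analyticity arguments used in Lemma \ref{l3} (choosing, for instance, the Gaussian regulator $\chi_{2}(\epsilon_2 y)=c_1 e^{-\epsilon_2 y^2}$, for which the shifted integral converges and can be analytically continued), verifying that the iterated limits $\epsilon_2\rightarrow 0$ and $\epsilon_1\rightarrow 0$ may be taken in this order on $\mathscr{S}(\mathbb{R}^d)$, and checking that the factor $e^{-a_{\mu}(\Theta x)_{\mu}}$ remains harmless precisely because skew-symmetry removes its dependence on the shifted component $x^{\mu}$.
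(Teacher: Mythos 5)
Your proposal is correct, and its global skeleton is the same as the paper's: both arguments reduce the uniqueness claim, via Lemma \ref{l2.1}, to the single identity $dX^{\mu}\times_{\Theta}dX^{\mu}=(dX^{\mu})^2$, and both start from the adjoint actions $\alpha_{\Theta x}(dX^{\mu})=e^{-a_{\mu}(\Theta x)_{\mu}}dX^{\mu}$ and $\alpha_{y}(dX^{\mu})=e^{-a_{\mu}y_{\mu}}dX^{\mu}$. Where you genuinely diverge is in the evaluation of the resulting oscillatory integral. The paper never meets a complex-shifted delta: it performs the change of variables $(x,y)\rightarrow(x-\Theta^{-1}y,y)$, under which the phase $e^{-ixy}$ is invariant because $(\Theta^{-1}y)\cdot y=0$ by skew-symmetry, while the damping factor $e^{-a^{\mu}(\Theta x+y)_{\mu}}$ collapses to $e^{-a^{\mu}(\Theta x)_{\mu}}$; the $y$-integration then yields an ordinary real $\delta(x)$ and evaluation at $x=0$ gives $1$, exactly as in Lemma \ref{l3}. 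You instead integrate $y$ first with the factor $e^{-a_{\mu}y_{\mu}}$ still present, accept the delta at a complex argument, and rescue the computation by observing that $\Theta_{\mu\mu}=0$ makes $e^{-a_{\mu}(\Theta x)_{\mu}}$ insensitive to the complex shift in the $x^{\mu}$-direction. Both routes hinge on skew-symmetry of $\Theta$, but they buy different things: the paper's substitution is technically cleaner (no analytic continuation or contour shifting to justify), yet it implicitly assumes $\Theta$ is invertible, which in fact fails for the degenerate matrices the paper later uses physically (e.g. the four-dimensional FRW choice $\Theta_{ij}=0$, $\Theta_{0i}\neq 0$ has rank two, and no skew-symmetric matrix in odd dimensions is invertible); your argument uses only the vanishing of the diagonal entries, so it covers exactly these degenerate cases, at the price of the extra analyticity work you correctly flag as the main obstacle. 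A hybrid that gets the best of both is the substitution $y\rightarrow y-\Theta x$, which also leaves the phase invariant (since $x\cdot\Theta x=0$), removes the $x$-dependent damping instead, requires no inverse of $\Theta$, and still produces only a real delta.
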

\begin{proof}
	We start the proof by using the definition of the deformed product (see Formula \ref{l2.1}) for $\mu=0$, i.e. 
	
	\begin{align*} &
	dX^{0}\times_{\Theta} dX^{0}=(2\pi)^{-d}
	\lim_{\epsilon\rightarrow 0}\iint dx\, dy \,\chi(\epsilon x,\epsilon y )\,e^{-ixy} \, \alpha_{\Theta x}(	dX^{0})\alpha_{y}(dX^{0}) \\&=
	(2\pi)^{-d}
	\lim_{\epsilon\rightarrow 0}\iint dx\, dy \,\chi(\epsilon x,\epsilon y )\,e^{-ixy} \,  
	e^{-a^0(\Theta x+y)_0}	(dX^{0})^2\\&=(2\pi)^{-d}
	\lim_{\epsilon\rightarrow 0}\iint dx\, dy \,\chi_3(\epsilon x,\epsilon y )\,e^{-ixy} \,  
	e^{-a^0(\Theta x)_0}	(dX^{0})^2
	\\&=	(2\pi)^{-d}
	\lim_{\varepsilon_1\rightarrow 0}  \left(
	\int dx \lim_{\varepsilon_2\rightarrow 0} 
	\left(\int dy \, e^{-ixy}
	\chi_2(\varepsilon_2 y)\right)\,\chi_1(\varepsilon_1  x)	e^{-a^0(\Theta x)_0}	(dX^{0})^2\right)
	\\&=	 (2\pi)^{-d/2}		\lim_{\varepsilon_1\rightarrow 0}  \left(			
	\int dx \,
	\delta( {x} )\,\chi_1(\varepsilon_1  x) \,e^{-a^0(\Theta x)_0} (dX^{0})^2\right)=(dX^{0})^2,
	\end{align*}
	where in the last lines we used the explicit adjoint action (see Equation \ref{ad1}), we performed a variable substitution $(x,y)\rightarrow(x-\Theta^{-1}y,y)$ and furthermore we chose the regulators as in the proof of Lemma \ref{l3}. The proof for the spatial part is analogous. 
\end{proof}
Since we have shown in the former Lemma that the deformed product is equal to the undeformed, our two possible definitions agree, and hence uniqueness of the deformation follows. To clear the question of uniquenesses is, apart from mathematical curiosity,  important with   regards to the physical outcome.

\section{Physical Outcome}
The last sections were devoted to the study of the rigorous state of the strict deformation. We proved that the deformation is well-defined and it is given by self-adjoint operators. In this section we study the outcome, i.e. the results, of the deformation. In particular, we obtain a curved space-time metric which we relate to well-known physical models.  
	\subsection{Family of Conformal-Flat Space-Times}
This section is devoted to the class of space-times that we obtain by deformation. 
In particular, there are two deformations that lead to an interesting space-time. First of all, we introduced the deformation of the commutative differential structure by a freedom that exists due to the commutative commutator relations of the algebra. The second step in direction of the following result, was the strict Rieffel-deformation. 
\begin{theorem}\label{t1}
	The deformed differential algebra, given in Lemma \ref{ex1}, gives the following warped convoluted line-element 
	\begin{align*} 
	(ds^2)_{\Theta}= (\eta_{\mu\nu})_{\Theta}\, 	d\hat x^{\mu}d\hat x^{\nu},
	\end{align*}
where from the deformation of the flat   line-element we obtain the \textbf{curved space-time metric} $(\eta_{\mu\nu})_{\Theta}=e^{-2a_{\mu}(\Theta \hat x)_{\mu}}\eta_{\mu\nu}$. 
\end{theorem}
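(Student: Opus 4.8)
The plan is to reduce the statement to the already-established evaluation of the deformed diagonal differentials in Lemma~\ref{defalg} and then transport the result back through the faithful representation $\pi$. First I would expand the flat line-element $ds^2=\eta_{\mu\nu}\,d\hat x^{\mu}d\hat x^{\nu}$ using that the Minkowski metric $\eta_{\mu\nu}$ is diagonal, so that only the terms with $\mu=\nu$ can contribute. Invoking the definition of the deformed line-element (Definition~\ref{defm1}), the deformation of the constant tensor $\eta_{\mu\nu}$ is trivial, whence the warped convolution acts solely on the product of differentials; that is, $ds^2_{\Theta}=\eta_{\mu\nu}\,(d\hat x^{\mu}d\hat x^{\nu})_{\Theta}$.

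Next I would pass to the Schr\"odinger representation, writing $\pi(d\hat x^{\mu})=dX^{\mu}$, and observe that the off-diagonal deformed differentials $(dX^{\mu}dX^{\nu})_{\Theta}$ with $\mu\neq\nu$ are multiplied by $\eta_{\mu\nu}=0$ and therefore never enter; only the diagonal objects $(dX^{\mu})^2_{\Theta}$ survive. Lemma~\ref{defalg} then supplies the explicit form $(dX^{\mu})^2_{\Theta}=e^{-2a_{\mu}(\Theta X)_{\mu}}(dX^{\mu})^2$ on the stable domain $\mathscr{S}(\mathbb{R}^d)$, while Lemma~\ref{l5} guarantees that this object is self-adjoint there, so the deformed line-element is again a legitimate operator-valued quantity.

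The final step is to map back with $\pi^{-1}$, which exists because $\pi$ is faithful for $a^{\mu}\neq0$. Here I would identify the prefactor $e^{-2a_{\mu}(\Theta X)_{\mu}}$ as the representation of the abstract, coordinate-dependent algebra element $e^{-2a_{\mu}(\Theta \hat x)_{\mu}}$, so that $\pi^{-1}\big(e^{-2a_{\mu}(\Theta X)_{\mu}}(dX^{\mu})^2\big)=e^{-2a_{\mu}(\Theta \hat x)_{\mu}}(d\hat x^{\mu})^2$. Absorbing this scalar factor into the constant metric and summing over $\mu$ then yields $ds^2_{\Theta}=(\eta_{\mu\nu})_{\Theta}\,d\hat x^{\mu}d\hat x^{\nu}$ with $(\eta_{\mu\nu})_{\Theta}=e^{-2a_{\mu}(\Theta \hat x)_{\mu}}\eta_{\mu\nu}$, as claimed.

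The genuinely delicate point is not the algebraic bookkeeping but the legitimacy of this last transport step: one must ensure that the coordinate-dependent exponential operator $e^{-2a_{\mu}(\Theta X)_{\mu}}$ really is the image under $\pi$ of an element of a suitable completion of the universal differential algebra $\Omega(\mathcal{A}_c)$, so that reading off $(\eta_{\mu\nu})_{\Theta}$ as an abstract metric coefficient is meaningful rather than merely a representation-level statement. The analytic heavy lifting—convergence of the oscillatory integral (Lemma~\ref{l3}), the closed-form evaluation (Lemma~\ref{defalg}) and self-adjointness (Lemma~\ref{l5})—has already been discharged, so the theorem is essentially an assembly of these results together with the diagonality of $\eta$ and the faithfulness of $\pi$.
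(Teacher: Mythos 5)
Your proposal follows essentially the same route as the paper's own proof: both reduce the statement to Lemma \ref{defalg} via Definition \ref{defm1}, then use the multiplicativity and faithfulness of $\pi$ to transport the $\hat x$-dependent conformal factor back to the abstract algebra and absorb it into the metric. Your version merely spells out details the paper leaves implicit (diagonality of $\eta$, self-adjointness via Lemma \ref{l5}, and the legitimate caveat that $e^{-2a_{\mu}(\Theta \hat x)_{\mu}}$ lives in a completion of $\Omega(\mathcal{A}_c)$ rather than in the polynomial algebra itself).
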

	\begin{proof}
		The deformed line-element is obtained by using the multiplicative properties of the faithful representation and is essentially the result of Lemma \ref{defalg}. From the deformation we obtained an $\hat x$-dependent conformal factor which can be included in the metric and hence it results in a curved space-time.
	\end{proof}
	By applying two deformations  a curved space-time was generated out of a flat one. In particular, we obtain a whole class of conformal flat space-times depending on the choice of the deformation parameters. 
Moreover, the theorem represents a new path to curving space-time by a strict  deformation procedure.  Namely, the deformation acts as a gravitational field, i.e. a source that is curving the space-time.   
   	 \subsection{Ultra-Static Space-Times}
   	 An interesting example that we obtain  from the simplest choice of our algebra (see Equation \ref{rd1}) and the deformation, is that of an ultra-static space-time. The line-element of such a space-time is given by, \cite[Definition 3.1.2.]{DAP1} or \cite[Chapter 6.1]{WA},
   	 \begin{equation}\label{us}
   	 ds^2=dt^2-h_{ij}(x,y,z)dx^i\,dx^j.
   	 \end{equation}
   	 These space-times are particular interesting with regards to the geometry they generate (see for example \cite{SH} and \cite{SST}). For instance, on such space-times the paths of light coincide with the geodesics of the spatial part of the metric. Moreover, they are the only class of space-times that admit covariant constant time-like vector fields. Since this admission means that such vector fields do not accelerate, they   fall under a rightful extension of  Minkowskian inertial frames. 
   	 \\\\
   	 Apart from their physical interest in regards to gravity, they have as well generated a lot of interest from an algebraic quantum field theoretical (AQFT) point of view. This stems from the fact that for such space-times, the    strong energy nuclearity condition  has been proven for the   free massive Klein-Gordon field, \cite{V}. For further  applications in AQFT see \cite{SA1}, \cite{FS}, \cite{GWU},  \cite{GS} and \cite{SK},    to mention just a few. \\\\
   	 Next, we turn our attention to the deformation and   outcome of an ultra-static space-time given by the following theorem. 
   	 
   	 \begin{theorem} 
   	 	Let the  deformation matrix $\Theta_{0j}=0$ be equal to zero. Then,	the deformed line-element, given in Theorem \ref{t1}, admits  an \textbf{ultra-static space-time}. The line-element that is curved by deformation  is given by
   	 	\begin{align*} 
   	 	(ds^2)_{\Theta}=   d\hat t^2-h_{ij}(\hat{\mathbf{ {x}}})\,d\hat{x}^{i}d\hat{x}^{j},
   	 	\end{align*}
   	 	where $\hat x^0=\hat t$ and the spatial metric $h_{ij}(\hat{\mathbf{ {x}}})$ has the following form

   	 	\begin{align}
   	 	h_{ij}(\hat{x})=	\begin{pmatrix} e^{-2a_1(\Theta \hat{x})_1} &  0 & 0 \\ 0 & e^{-2a_2(\Theta\hat{x})_2}& 0   \\ 0 & 0 &e^{-2a_3(\Theta\hat{x})_3}
   	 	\end{pmatrix}.
   	 	\end{align}
   	 \end{theorem}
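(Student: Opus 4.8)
The plan is to specialize the general deformed metric of Theorem~\ref{t1}, namely $(\eta_{\mu\nu})_{\Theta}=e^{-2a_{\mu}(\Theta \hat x)_{\mu}}\eta_{\mu\nu}$, to the hypothesis $\Theta_{0j}=0$ and read off that the resulting line-element is precisely of the ultra-static form~(\ref{us}). First I would unpack the exponents using the convention $(\Theta \hat x)_{\mu}=\Theta_{\mu\nu}\hat x^{\nu}$ together with the skew-symmetry of $\Theta$, which already forces $\Theta_{00}=0$. Everything then reduces to tracking how the hypothesis propagates through the temporal and spatial blocks separately.

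For the time--time component I would compute $(\Theta \hat x)_{0}=\Theta_{00}\hat x^{0}+\Theta_{0j}\hat x^{j}$; the first term vanishes by skew-symmetry and the second by hypothesis, so the conformal factor multiplying $\eta_{00}$ is $e^{0}=1$. Since the signature convention fixes $\eta_{00}=1$, the temporal part of the deformed line-element collapses to the undeformed $d\hat t^{2}$ with $\hat x^{0}=\hat t$. This is the key simplification that produces the unit lapse characteristic of an ultra-static metric.

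For the spatial block I would observe that $(\Theta \hat x)_{i}=\Theta_{i0}\hat x^{0}+\Theta_{ij}\hat x^{j}$, where skew-symmetry gives $\Theta_{i0}=-\Theta_{0i}=0$, so the time coordinate drops out entirely and each spatial exponent depends only on $\hat{\mathbf{x}}$. Combined with $\eta_{ii}=-1$, this yields the diagonal spatial metric $h_{ij}(\hat{\mathbf{x}})=\mathrm{diag}\big(e^{-2a_1(\Theta\hat{x})_1},e^{-2a_2(\Theta\hat{x})_2},e^{-2a_3(\Theta\hat{x})_3}\big)$ stated in the theorem. Because the deformed metric inherited from Theorem~\ref{t1} is already diagonal, there are automatically no $d\hat t\,d\hat x^{i}$ cross terms, so assembling the pieces gives exactly $(ds^2)_{\Theta}=d\hat t^{2}-h_{ij}(\hat{\mathbf{x}})\,d\hat x^{i}d\hat x^{j}$.

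The computation itself is short; the point that deserves care --- and which I would flag as the crux --- is the time-independence of the spatial metric. Ultra-staticity demands both the absence of shift terms and a static spatial block, and it is precisely the interplay of the hypothesis $\Theta_{0j}=0$ with the skew-symmetry $\Theta_{j0}=-\Theta_{0j}=0$ that simultaneously trivializes the temporal conformal factor and removes every occurrence of $\hat x^{0}=\hat t$ from the spatial exponents. Without invoking skew-symmetry for the lowered pair $\Theta_{j0}$, a residual term proportional to $\Theta_{i0}\hat t$ could survive and spoil the static character, so I would make that step explicit rather than leave it routine.
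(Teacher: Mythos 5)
Your proposal is correct and follows essentially the same route as the paper: Theorem \ref{t1} is specialized to $\Theta_{0j}=0$ and the result is matched against the ultra-static form (\ref{us}), with your explicit use of skew-symmetry (forcing $\Theta_{00}=0$ and $\Theta_{i0}=-\Theta_{0i}=0$, hence time-independence of $h_{ij}$) simply spelling out what the paper's one-line proof leaves implicit.
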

   	 \begin{proof}
   	 	The proof follows by inserting the particular choice of deformation matrix, i.e. $\Theta_{0j}=0$, for the metric  given in Theorem \ref{t1} and  from the definition of an ultra-static space-time, see Equation (\ref{us}).
   	 \end{proof}
   	 
   	 Since it is a   specific ultra-static space-time, this result may seem   restrictive. However, it is possible to obtain a more complex and hence a more generic spatial metric $h_{ij}(\hat{x})$ by choosing as the generators of deformation real-valued functions of the coordinate operators $\hat{x}$. Where, here, we refer specially to representations of the respective operators. Nevertheless, a reevaluation of the proofs concerning convergence and self-adjointness has to be done. In particular, the respective domains have to be chosen such that the whole procedure remains in a strict framework. 
		\subsection{Friedmann-Robertson-Walker}
		In this Subsection we investigate a specific physical example which results from the simplest deformation in four dimensions. Since, we know from the last section that we obtain a space-time with a  coordinate depending conformal term, the Friedmann-Robertson-Walker space-time  is in reach. 
	\begin{theorem}
Let the spatial parameters ${a}$ and the deformation matrix $\Theta$ be given as, 
\begin{align*}
a_0\Theta_{0i}\approx 0,\qquad  2a_i\Theta_{0i}=-H ,\qquad \Theta_{ij}=0, 
\end{align*} 
where $H$ is the Hubble parameter. Then, in four dimensions, the warp convoluted line-element from Theorem \ref{t1} gives the \textbf{Friedmann-Robertson-Walker space-time}. The flat line-element that is curved by deformation  is given by
		\begin{align*} 
		(ds^2)_{\Theta}&=   d\hat t^2-e^{ H\hat t} d\hat{\mathbf{x}}^2,
		\end{align*}
		where $\hat x^0=\hat t$ and $(d\hat{\mathbf{x}})^2=d \hat x_1^2+d\hat x_2^2+d\hat x_3^2$. Hence, we obtain the following \textbf{deformed flat space-time metric}, 
		  	\begin{align*}
(	\eta_{\mu\nu})_{\Theta} =	\begin{pmatrix}  1 &  0 & 0 & 0 \\ 0 & -e^{ H\hat t}& 0  & 0 \\ 0 & 0 &-e^{ H\hat t}& 0\\ 0 & 0 &0& -e^{ H\hat t}
		  \end{pmatrix}.
		  \end{align*}
 
	\end{theorem}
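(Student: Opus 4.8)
The plan is to obtain this metric as a direct specialization of Theorem \ref{t1}, since the stated constraints on $a$ and $\Theta$ are engineered precisely so that the general conformal factor collapses to the exponential inflationary form. First I would recall the deformed metric from Theorem \ref{t1},
\begin{align*}
(\eta_{\mu\nu})_{\Theta}=e^{-2a_{\mu}(\Theta\hat x)_{\mu}}\eta_{\mu\nu},
\end{align*}
so that the whole statement reduces to evaluating the four exponents $-2a_{\mu}(\Theta\hat x)_{\mu}$ one component at a time, using $(\Theta\hat x)_{\mu}=\Theta_{\mu\nu}\hat x^{\nu}$ from the Conventions together with the skew-symmetry of $\Theta$.

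Second, I would dispose of the temporal component $\mu=0$. Skew-symmetry gives $\Theta_{00}=0$, hence $(\Theta\hat x)_0=\Theta_{0i}\hat x^i$, and the exponent is $-2a_0\Theta_{0i}\hat x^i$. The hypothesis $a_0\Theta_{0i}\approx0$ makes this exponent (approximately) vanish, so that $(\eta_{00})_{\Theta}\approx\eta_{00}=1$ and the time part of the line-element stays the flat $d\hat t^2$.

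Third, and this is where the actual content lies, I would treat the spatial components $\mu=i$. The assumption $\Theta_{ij}=0$ removes the purely spatial piece, so that $(\Theta\hat x)_i=\Theta_{i0}\hat x^0=\Theta_{i0}\hat t$, using $\hat x^0=\hat t$. Applying skew-symmetry $\Theta_{i0}=-\Theta_{0i}$ rewrites the exponent in terms of $2a_i\Theta_{0i}\hat t$, at which point the defining relation $2a_i\Theta_{0i}=-H$ for the Hubble parameter produces the conformal factor $e^{H\hat t}$, the sign being fixed by the index placement on $a$ dictated by the Minkowski signature in the Conventions. Since $\eta_{ii}=-1$, this yields $(\eta_{ii})_{\Theta}=-e^{H\hat t}$ for each $i=1,2,3$, which is exactly the displayed diagonal matrix. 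Assembling the four entries into $(ds^2)_{\Theta}=(\eta_{\mu\nu})_{\Theta}\,d\hat x^{\mu}d\hat x^{\nu}$ then gives $d\hat t^2-e^{H\hat t}\,d\hat{\mathbf{x}}^2$.

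The final step is interpretive rather than computational: recognizing this line-element as the spatially flat Friedmann-Robertson-Walker metric with scale factor $e^{H\hat t/2}$, i.e. an exponentially expanding de Sitter universe driven by the deformation. I expect the only genuine obstacle to be the sign bookkeeping in the spatial exponent, where the skew-symmetry step $\Theta_{i0}=-\Theta_{0i}$ must be combined correctly with the upper-versus-lower index convention on $a$; an error there would replace $e^{H\hat t}$ by $e^{-H\hat t}$ and turn cosmological expansion into contraction. The approximate nature of the temporal term, inherited from $a_0\Theta_{0i}\approx0$, should likewise be flagged as the precise sense in which the identification with the exact FRW model holds.
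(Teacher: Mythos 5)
Your proposal is correct and takes essentially the same route as the paper: the paper's own proof is nothing more than the two-sentence remark that one substitutes the stated parameter choice into Theorem \ref{t1} (together with the observation that faithfulness of the representation requires the parameters $a$ to be non-vanishing). Your component-by-component evaluation of the exponents $-2a_{\mu}(\Theta\hat x)_{\mu}$, including the honest flagging of the sign bookkeeping between $\Theta_{i0}=-\Theta_{0i}$ and the upper-versus-lower placement of the index on $a$ (a point the paper, which freely interchanges $a^{\mu}$ and $a_{\mu}$, never resolves explicitly), is simply a more careful and detailed rendering of that same specialization argument.
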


	\begin{proof}
		By using the former result (Theorem \ref{t1}) and the specific choice of deformation parameters the theorem   can be easily proven.   In order for the representation of the differential algebra to be faithful we demanded that the parameters $a$ are non-vanishing. 
	\end{proof}
 
	In the former theorem we obtained an interesting result concerning the FRW space-time. It can be obtained by the deformation of the differential structure and by a well-defined procedure as the Rieffel quantization. 		Since, we originally had additional terms before applying the condition $a_0\Theta_{0i}\approx 0$ there are corrections to the usual FRW-cosmology. These deformations, if physically relevant, have to lead to meaningful results.  In the following, we would like to investigate the implications more thoroughly. In particular, it is interesting to investigate the scenario where we start with a  FRW-type metric, apply the deformations on it and try to solve the classical Einstein-equations for the deformed metric. Hence, are there solutions or in particular how do the well-known solutions change by terms appearing from deformation?
	\begin{lemma}
		Let the metric the line-element  be given as 
				\begin{align}  
		 ds^2 &=   d\hat t^2-\tilde{a}(\hat{t})^2 d\hat{\mathbf{x}}^2		\end{align}
	 Then, the deformation, for the  choice $\Theta_{ij}=0$, induces the deformed metric  given as 
	 	\begin{align}  ds^2 &=   e^{-2b_ix^{i} } d\hat t^2- {a}(\hat{t})^2 d\hat{\mathbf{x}}^2	 ,
	 	\end{align}
	 	where we defined the new variables $a (\hat{t})^2:=e^{-2a_{i} \Theta_{i0} \hat t  } \tilde{a}(\hat{t})^2$, with $b_{i}=a_{0}\Theta_{0i}$.
	\end{lemma}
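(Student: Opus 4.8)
The plan is to reduce the statement to a direct application of Theorem \ref{t1} (equivalently Lemma \ref{defalg}), once it is established that the time-dependent scale factor plays the role of an invariant coefficient under the deformation. First I would write the line-element in the represented form $ds^2 = (dX^0)^2 - \tilde{a}(X^0)^2\,\delta_{ij}\,dX^i dX^j$, where $X^0=\pi(\hat t)$ and the scale factor is understood through the functional calculus of the self-adjoint operator $X^0$. Because the deformation of Definition \ref{defm1} acts only on the tensored differentials, the whole computation hinges on how $\tilde{a}(X^0)^2$ interacts with the warped-convolution integral.

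The key step is to show that $\tilde{a}(X^0)^2$ behaves as a \emph{deformed constant}. Since by Proposition \ref{ur1} all the $X^{\mu}$ mutually commute, $X^0$ commutes with every generator, hence with the unitaries $U(p)$ and with the spectral resolution $E$. Consequently $\tilde{a}(X^0)^2$ is invariant under the adjoint action, $\alpha_{\Theta x}(\tilde{a}(X^0)^2)=\tilde{a}(X^0)^2$, and it commutes with $dE(x)$. Applying Definition \ref{defwca} to the product $\tilde{a}(X^0)^2 (dX^i)^2$ I would then factor it out of the integral,
\begin{align*}
\bigl(\tilde{a}(X^0)^2 (dX^i)^2\bigr)_{\Theta}
= \int dE(x)\,\tilde{a}(X^0)^2\,\alpha_{\Theta x}\bigl((dX^i)^2\bigr)
= \tilde{a}(X^0)^2\,\bigl((dX^i)^2\bigr)_{\Theta},
\end{align*}
so that the deformation reduces to the one already computed in Lemma \ref{defalg}.

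It then remains to insert the two spectral choices and map back through $\pi^{-1}$. For the temporal differential, skew-symmetry gives $\Theta_{00}=0$, so $(\Theta X)_0=\Theta_{0i}X^i$ and Lemma \ref{defalg} produces the factor $e^{-2a_0(\Theta X)_0}=e^{-2a_0\Theta_{0i}\hat x^i}=e^{-2b_i\hat x^i}$ with $b_i=a_0\Theta_{0i}$. For the spatial differentials the hypothesis $\Theta_{ij}=0$ collapses $(\Theta X)_i=\Theta_{i0}X^0=\Theta_{i0}\hat t$, so the conformal factor is $e^{-2a_i\Theta_{i0}\hat t}$, which combined with the factored scale factor yields $a(\hat t)^2=e^{-2a_i\Theta_{i0}\hat t}\tilde{a}(\hat t)^2$. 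Collecting the deformed components reproduces the asserted deformed line-element.

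The main obstacle I expect is not the algebra but the justification of the factorization in the key step: one must argue carefully that $\tilde{a}(X^0)^2$, now an unbounded operator-valued coefficient rather than a scalar constant, genuinely commutes with the spectral measure and preserves the Schwartz domain, so that the oscillatory-integral estimates of Lemma \ref{l3} still control the deformed object. Since $\tilde{a}(X^0)^2$ factors cleanly out of the integral and the remaining differential part is precisely the one already shown to be well-defined, I expect the convergence and self-adjointness arguments to carry over essentially verbatim, provided the domain is chosen so that $\tilde{a}(X^0)^2$ maps $\mathscr{S}(\mathbb{R}^d)$ into itself.
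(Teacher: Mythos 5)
Your proposal is correct and takes essentially the same route as the paper: the paper's own proof just invokes Theorem \ref{t1} together with ``the commutativity of the coordinate operators $\hat{x}$'', which is exactly the fact you spell out when showing that $\tilde{a}(X^0)^2$ commutes with the unitaries $U(p)$ and the spectral measure, is invariant under $\alpha_{\Theta x}$, and therefore factors out of the warped-convolution integral before Lemma \ref{defalg} is applied componentwise. Your version merely makes explicit the factorization and domain considerations that the paper's one-line proof leaves implicit.
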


	\begin{proof}
	To see that the deformation of the curved metric induces the terms that are written above we simply use the former result (Theorem \ref{t1})  and the commutativity of the coordinate operators $\hat{x}$.
	\end{proof} 
In what follows we calculate the Einstein Equations for the deformed Friedmann-Robertson-Walker metric. In order to do so we assume that our obtained space-time is classical, in the sense that standard Riemann geometry applies and we can use standard techniques to calculate the curvature. Hence, we pose the question if the space-time that we obtained by deformation quantization has classical solutions of the Einstein Equations as well. The metric components are given by, 
\begin{equation}
g_{00}=  e^{-2b_ix^{i} },\qquad g_{0j}=0,\qquad g_{ij}=-\delta_{ij}a^2(t)
\end{equation} 
For such a metric we have the following   Einstein equations,    
   \begin{align} \label{e1}
   3     \frac{\dot a(t)^2}{a(t)^2}  +\Lambda g_{00}&=\kappa \rho \\
   -2 \frac{\dot a(t)}{a(t)}b_i&=\kappa T_{0i}\\\label{e2}
   2 \frac{\ddot a(t) }{a(t) }+ \frac{\dot a(t)^2}{a(t)^2}  +\frac{2}{3}   b^2  g_{00}    + \Lambda  g_{00}& =-\kappa P\,g_{00},
   \end{align}  
   Let us for the moment set $\Lambda=0$. The second equation does not imply in principle anything new except for the fact that $T_{0i}$ comes with a constant depending on the deformation parameter $b$, hence in the undeformed case it is equal zero. 
   We proceed as in \cite[Chapter 5.2]{WA}  where we consider the case   $\Lambda=0$ and $b_{i}=\frac{1}{2}\Theta e_{i}$, where $e_{i}$ is a unit vector. Next we insert Equation (\ref{e1}) into Formula (\ref{e2}) and we obtain, 
   \begin{align*} 
   3\frac{\ddot a(t) }{a(t)} =-4\pi\left(\rho+3 P g_{00}
   \right)-2c^2g_{00}.
   \end{align*} 
   Since we are not interested in the most general solution but rather investigate the question whether there exists one at all for the deformed case, for $\Theta\neq 0$, we assume dust like matter $P=0$ and moreover, we only take perturbations to second order in the deformation parameter $\Theta$.  Hence, we end up with two equations, where we choose the units such that $\kappa=8\pi$,
    \begin{align} \label{e3}
      3     \frac{\dot a(t)^2}{a(t)^2}   &=8\pi\rho\\\label{e4}
    3\frac{\ddot a(t) }{a(t)} &=-4\pi \rho  +\frac{3}{2} \Theta^2 a(t)^{-2}+\mathcal{O}(\Theta^3),
    \end{align} 
   By multiplying Equation (\ref{e1}) with $a^2$ and taking the time-derivative w.r.t. the first equation and insert for the function $\ddot a(t)$ the second equation we obtain, 
 \begin{align} 
\dot{\rho}a^3=-3\rho\,\dot{a}\,a^2+ \frac{3}{8\pi}\Theta^2 \dot{a} ,
 \end{align}
 which   rewritten as $\frac{d}{dt}({\rho}a^3)=  \frac{3}{8\pi}\Theta^2 \dot{a}$
reads
  \begin{align*} 
  {\rho}\,a(t)^3 =  \frac{3}{8\pi}\Theta^2  {a}(t) +C^{'},\qquad C^{'}\in\mathbb{R}.
  \end{align*}
 Next, by multiplying Equation (\ref{e3}) by the   function $a(t)^3$ we have,  
   with $C= \frac{8 }{3}\pi C^{'}$  
      \begin{align*} 
      \dot a(t)^2     -\frac{C}{a(t)} -  \Theta^2 =0.
      \end{align*}
 This result is equal to dust-like matter in the Friedmann-Robertson-Walker universe with the non-commutative parameter $\Theta$ playing the role of  $k$, where $k$ belongs to the set $\{-1,0,1\}$ and plays  the role of the spatial curvature index. However, since the square of the real deformation constant $\Theta$ is positive this means that the deformation interplays between a flat and a negatively curved spatial sector. Hence, in the former results we obtained  interesting results concerning the FRW space-time.   In particular we studied the additional coordinate dependent effect induced by deformations and obtained physically interesting solutions. This is the most effective scenario in terms of what deformation quantizations can achieve. Starting from the simplest theory they have to, apart from inducing the well-known models,  induce new ones. Ideally, as in our case, these new models have, often enough, a physical significance.   
\section{Moyal-Weyl as a Consistency Condition}
In this section we study the implication of non-commutativity between space and time on the conformal-flat space-times that were induced from a flat metric by deformations. Hence, we investigate how this specific space-time behaves  under the additional assumption of a non-commutative space-time. We start by assuming that we have a Moyal-Weyl type of non-commutativity between space and time,
      \begin{align*} 
[\hat  x^{0},\hat x^{j}]= i\,\Omega^{0j},\qquad\Omega^{0j}\in \mathbb{R}^n
      \end{align*}
      with all other commutators being equal to zero. The representations of these operators are as before, with the addition of a one  higher dimensional Heisenberg-Lie algebra and we denote the representations by $\pi(\hat x^{\mu})=Y^{\mu}$ and $\pi( d\hat x^{\mu})=dY^{\mu}$,    
           \begin{align*} 
 Y^{0}= X^{0}+Q^{d+1} ,\qquad    Y^{i}= X^{i}+ \Omega^{0i}P^{d+1}
           \end{align*}
      The representation of the differentials is given as before by
     $$ \pi(d\hat{x}^{\mu}) = i{q}_{b}\,[P^{b} , \pi(\hat{x}^{\mu})],\qquad b=1,\dots,d,$$  and the well-definedness of the deformation holds as before on the dense and stable domain $\mathscr{S}(\mathbb{R}^{d+1})$.   
     \\\\
In non-commutative differential geometry, one can build a metric out of a given associative algebra and calculate the associated curvature tensors and the Ricci scalar. However, in addition to the, from commutative differential geometry known, requirements such as metricity and torsion freeness the condition of centrality of the metric is essential, see \cite{BM}. In particular, only for the case of central bi-modules (see \cite[Chapter 6-9]{L}, \cite{DV1}, \cite{DV2} and references therein) the framework of non-commutative geometry induces the non-commutative analogue of geometric quantities.  In the case at hand centrality for the metric tensor is given by,  
    \begin{align*} 
    [ \hat{x}^{\mu},g_{\nu\rho}d\hat{x}^{\nu}\otimes d\hat{x}^{\rho}]= 0,
    \end{align*}
   where a summation   which can be rewritten as,  
       \begin{align*} 
       [ \hat{x}^{\mu},g_{\nu\rho}]d\hat{x}^{\nu}\otimes_{\mathcal{A}_c} d\hat{x}^{\rho} +
   g_{\nu\rho} [ \hat{x}^{\mu},d\hat{x}^{\nu}]\otimes_{\mathcal{A}_c} d\hat{x}^{\rho} + g_{\nu\rho}d\hat{x}^{\nu}\otimes_{\mathcal{A}_c} [ \hat{x}_{\mu}, d\hat{x}^{\rho}]    = 0\\
    [ \hat{x}^{\mu},g_{\nu\rho}]d\hat{x}^{\nu}\otimes_{\mathcal{A}_c} d\hat{x}^{\rho} +
  C^{\mu\nu}_{\lambda}  \, g_{\nu\rho} d\hat{x}^{\lambda}\otimes_{\mathcal{A}_c} d\hat{x}^{\rho} +C^{\mu\rho}_{\lambda}\, g_{\nu\rho}d\hat{x}^{\nu}\otimes_{\mathcal{A}_c}   d\hat{x}^{\lambda} = 0
  \\
\left(  [ \hat{x}^{\mu},g_{\nu\rho}]  +
  C^{\mu\kappa}_{\,\,\,\,\nu}  \, g_{\kappa\rho}   +C^{\mu\kappa}_{\,\,\,\,\rho}\, g_{\nu\kappa}\right)d\hat{x}^{\nu}\otimes_{\mathcal{A}_c}   d\hat{x}^{\rho} = 0,
       \end{align*}
    which leads to the following commutator relations. 
     \begin{align*} 
   [ \hat{x}^{\mu},g_{\nu\rho}]  =-\left(
     C^{\mu\kappa}_{\,\,\,\,\nu}  \, g_{\rho\kappa}  +C^{\mu\kappa}_{\,\,\,\,\rho}\, g_{\nu\kappa}  \right).
     \end{align*}
    The fact that the metric only depends on the   generators of the algebra, suggests that if the algebra is chosen commutative the sum on the right has to be zero. For the case of the FRW-space-time of inflation and for the specific choice of the $C's\neq0$ it does not hold! However, since we turned on the non-commutativity of space and time   solutions to this problem exist.
    
    \begin{theorem}
  Let the deformation matrix be given by  $\Theta_{0j}=\Theta e_{j}$, where $\Theta$ is a real parameter and $e$ is the unit vector.    Then, for  the class of conformal flat metrics $g_{\mu\nu}=(\eta_{\mu\nu})_{\Theta}$, that are obtained by deformation, with $\Theta_{ij}=0$, \textbf{centrality} is fulfilled by assuming the constant non-commutativity of the Moyal-Weyl plane
   $$ [\hat  x^{0},\hat x^{j}]= i\,\Omega^{0j}=i\Omega e^j $$  with the additional condition
    $  \Omega^{-1} =  {n}\Theta$.
    \end{theorem}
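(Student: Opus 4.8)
The plan is to reduce the tensorial requirement $[\hat{x}^\mu,\,g_{\nu\rho}\,d\hat{x}^\nu\otimes_{\mathcal{A}_c} d\hat{x}^\rho]=0$ to the per-component identity already extracted in the text,
\begin{equation*}
[\hat{x}^\mu,g_{\nu\rho}]=-\left(C^{\mu\kappa}_{\,\,\,\,\nu}\,g_{\rho\kappa}+C^{\mu\kappa}_{\,\,\,\,\rho}\,g_{\nu\kappa}\right),
\end{equation*}
and then to check it entry by entry for the conformal-flat metric $g_{\mu\nu}=(\eta_{\mu\nu})_{\Theta}=e^{-2a_\mu(\Theta\hat{x})_\mu}\eta_{\mu\nu}$ of Theorem \ref{t1}. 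First I would read off the structure constants from Lemma \ref{ex1}: the relation $[\hat{x}^\mu,d\hat{x}^\nu]=ia^\mu\delta^{\mu\nu}d\hat{x}^\nu$ forces $C^{\mu\nu}_{\,\,\,\,\sigma}=ia^\mu\delta^{\mu\nu}\delta^\nu_\sigma$, so the right-hand side collapses to $-2ia^\mu g_{\nu\rho}$ whenever $\mu=\nu=\rho$ and vanishes for every other choice of indices. Consequently the left-hand side must likewise be a multiple of the corresponding diagonal metric component, and the whole theorem amounts to showing that the Moyal-Weyl relation $[\hat{x}^0,\hat{x}^j]=i\Omega e^j$ supplies exactly that multiple.

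The decisive structural observation is that, for $\Theta_{ij}=0$ and $\Theta_{0j}=\Theta e_j$, each conformal factor is the exponential of an expression \emph{linear} in the coordinates: $(\Theta\hat{x})_0=\Theta_{0j}\hat{x}^j$ contains only the spatial operators, while $(\Theta\hat{x})_i=\Theta_{i0}\hat{x}^0$ contains only $\hat{x}^0$. Hence $g_{00}$ depends solely on the $\hat{x}^j$ and each $g_{ii}$ solely on $\hat{x}^0$, and the mixed commutators $[\hat{x}^i,g_{00}]$ and $[\hat{x}^0,g_{ii}]$ vanish because the relevant coordinates commute. For the surviving entries I would use that $[\hat{x}^\mu,A]$ is a c-number under the Moyal-Weyl relations, so that $[\hat{x}^\mu,e^{A}]=[\hat{x}^\mu,A]\,e^{A}$; this makes each nonzero commutator automatically proportional to the metric component itself. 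Concretely,
\begin{equation*}
[\hat{x}^0,g_{00}]=-2a_0\Theta\,e_j[\hat{x}^0,\hat{x}^j]\,g_{00}=-2ia_0\Theta\Omega\Big(\textstyle\sum_{j}e_je^j\Big)g_{00},
\end{equation*}
and analogously $[\hat{x}^j,g_{ii}]=-2ia_i\Theta\Omega\,e_ie^j\,g_{ii}$ for the spatial entries.

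Matching these expressions to the collapsed right-hand side $-2ia^\mu g_{\nu\rho}$ turns centrality into a purely algebraic relation among $a$, $\Theta$, $\Omega$ and the components of $e$. The mixed spatial equations fix how $e$ must be handled consistently, and the time-time equation, once one collects the contribution of all $n$ spatial directions contained in $(\Theta\hat{x})_0=\sum_{j=1}^n\Theta_{0j}\hat{x}^j$ and inserts the normalisation of $e$, reduces to a single scalar equation for the combination $\Omega\Theta$ that carries the dimension-dependent factor, giving the stated condition $\Omega^{-1}=n\Theta$. I expect the main obstacle to be precisely this final bookkeeping: keeping the Minkowski conventions of the Conventions paragraph straight, so that the index raising and lowering ($e_j$ versus $e^j$, and $a_0=a^0$) enter with the correct signs, and correctly accumulating the factor $n$ from the spatial sum $\sum_{j}e_je^j$ rather than from the value of a single direction. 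Everything else is direct substitution of Theorem \ref{t1} and Lemma \ref{ex1} into the centrality identity.
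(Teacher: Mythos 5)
Your setup is the same as the paper's: you reduce centrality to the component identity $[\hat{x}^{\mu},g_{\nu\rho}]=-\left(C^{\mu\kappa}_{\,\,\,\,\nu}\,g_{\rho\kappa}+C^{\mu\kappa}_{\,\,\,\,\rho}\,g_{\nu\kappa}\right)$, read off $C^{\mu\nu}_{\,\,\,\,\sigma}=ia^{\mu}\delta^{\mu\nu}\delta^{\nu}_{\sigma}$ from Lemma \ref{ex1} so that the right-hand side survives only for $\mu=\nu=\rho$, note that $[\hat{x}^{i},g_{00}]$ and $[\hat{x}^{0},g_{ii}]$ vanish, and evaluate the surviving commutators via $[\hat{x}^{\mu},e^{A}]=[\hat{x}^{\mu},A]\,e^{A}$, which is just a cleaner packaging of the paper's Taylor expansion and resummation. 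The difference is that the paper carries the computation out fully only in two dimensions, where both the time-time and the space-space components collapse to the single relation $(\Omega^{01})^{-1}=\Theta_{01}$, and then declares higher dimensions straightforward; you instead aim at general $n$ and defer the decisive step to ``final bookkeeping.''

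That deferral is a genuine gap, because the bookkeeping is where the whole content of the statement lies, and your own intermediate formulas show that it does not close for $n>1$. The time-time component gives $\Theta\Omega\sum_{j}e_{j}e^{j}=1$, and to extract the advertised factor $n$ you need $e_{j}e^{j}=1$ for every $j$. But each diagonal spatial component ($\mu=\nu=\rho=i$, no sum) gives $\Theta\Omega\,e_{i}e^{i}=1$, i.e.\ $\Omega^{-1}=\Theta$ with no factor $n$; and each off-diagonal spatial component ($\mu=j\neq i=\nu=\rho$), whose right-hand side vanishes identically because $C^{j\kappa}_{\,\,\,\,i}\propto\delta^{j\kappa}\delta^{\kappa}_{i}$, forces $e_{i}e^{j}=0$ — your formula $[\hat{x}^{j},g_{ii}]=-2ia_{i}\Theta\Omega\,e_{i}e^{j}\,g_{ii}$ makes this explicit. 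These three families are mutually inconsistent once $n\geq 2$: $e_{j}e^{j}=1$ for all $j$ contradicts $e_{i}e^{j}=0$ for $i\neq j$, and the factor $n$ from the time-time sum contradicts the factor $1$ from each spatial diagonal. No choice of index conventions for $e_{j}$ versus $e^{j}$ (or $a_{\mu}$ versus $a^{\mu}$) repairs this, since the obstruction is structural: the spatial exponents contain a single coordinate $\hat{x}^{0}$ while the time exponent contains all $n$ spatial coordinates. So the ``purely algebraic relation'' you postpone cannot simultaneously enforce centrality in all components and deliver $\Omega^{-1}=n\Theta$; it closes only in the case the paper actually proves, namely $d=2$, where all conditions coincide and $\Omega^{-1}=\Theta=n\Theta$ trivially because $n=1$. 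To turn your outline into a proof you would have to either restrict to two dimensions, or state explicitly which components of the centrality condition are imposed (and with which choice of $e$) in higher dimensions — a point the paper itself leaves unaddressed behind the phrase that the higher-dimensional calculation is ``straight-forward.''
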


    \begin{proof} 
     For our concrete case, i.e. for the class of conformaly flat metrics with $\Theta_{ij}=0$ we have, 
     \begin{align} \label{e6}
    \eta_{\nu\rho} [ \hat{x}^{\mu},e^{-2a_{\nu}(\Theta \hat x)_{\nu}}]  =-\left(
     C^{\mu\kappa}_{\,\,\,\,\nu}  \, g_{\rho\kappa}  +C^{\mu\kappa}_{\,\,\,\,\rho}\, g_{\nu\kappa}  \right).
     \end{align}
Moreover, for simplicity we prove it in two dimensions. 
     By taking $\nu=0, \rho=0$ we obtain     
     \begin{align} \label{e7}
 [ \hat{x}^{\mu},e^{-2a_{0}(\Theta \hat x)_{0}}]  =-2 
     C^{\mu\kappa}_{\,\,\,\,0}  \, \eta_{0\kappa}  e^{-2a_{\kappa}(\Theta \hat x)_{\kappa}}. 
     \end{align}
    If $\mu$ is a spatial component the left and the right-hand vanish. Hence, we are left with the case $\mu=0$,
       \begin{align} \label{e5}
       [ \hat{x}^{0 },e^{-2a_{0}(\Theta \hat x)_{0}}]  =-2 
       C^{0\kappa}_{\,\,\,\,0}  \, \eta_{0\kappa}  e^{-2a_{\kappa}(\Theta \hat x)_{\kappa}}  =-2 
      ia_0 \,    e^{-2a_{0}(\Theta \hat x)_{0}}  ,
         \end{align}
    where in the last line we used the specific form of the $C$'s, i.e. $C^{00}_{\,\,\,\,0}=ia_0$, $C^{11}_{\,\,\,\,1}=ia^1$, with all others being zero. The left-hand side is calculated by using the Taylor-expansion of the exponential function, 
     \begin{align*} 
     [ \hat{x}^{0 },e^{-2a_{0}(\Theta \hat x)_{0}}] &  =\sum_{k=0}^{\infty}  \frac{(-2a_{0}\Theta_{01})^{n} }{n!} [ \hat{x}^{0 },( \hat x^{1})^{n}]
     \\&=\sum_{k=0}^{\infty} \frac{(-2a_{0}\Theta_{01})^{n} }{n!} n\,  [ \hat{x}^{0 }, \hat x^{1}] (\hat x^{1})^{n-1} \\&
     =   i\Omega^{01}\sum_{k=0}^{\infty} \frac{(-2a_{0}\Theta_{01})^{n} }{(n-1)!}\,   (\hat x^{1})^{n-1} 
     \\&= -2ia_{0}\,\Omega^{01} \, \Theta_{01} \, e^{-2a_{0}(\Theta \hat x)_{0}}
     \end{align*}
    By equating both sides of Equation (\ref{e5}) we obtain the condition 
    $\left( \Omega^{01}\right)^{-1}  = \Theta_{01}$. The case $\nu=0, \rho=0, \mu=1$ vanishes on both sides of Equation (\ref{e7}).  Next, for  $\nu=0, \rho=1$ we have zero on the left-hand side of Equation  (\ref{e6}), and the right-hand side vanishes as well   for all $\mu$'s. Hence, we are left with the case $\nu=1, \rho=1$
        \begin{align*} \label{e9}
  -      [ \hat{x}^{\mu},e^{-2a_{1}(\Theta \hat x)_{1}}]  =-\left(
        C^{\mu\kappa}_{\,\,\,\,1}  \, g_{1\kappa}  +C^{\mu\kappa}_{\,\,\,\,1}\, g_{1\kappa}  \right),
        \end{align*}
     which for $\mu=0$ obviously vanishes, and for $\mu=1$ and the same techniques as before  we have
       \begin{align*}  
    2ia_1\Omega^{01}\Theta_{10}=          [ \hat{x}^{1},e^{-2a_{1}(\Theta \hat x)_{1}}] & =-\left(
       C^{11}_{\,\,\,\,1}  \, g_{11}  +C^{11}_{\,\,\,\,1}\, g_{11}  \right)\\&=+ 2ia_{1}
       e^{-2a_{1}(\Theta \hat x)_{1}},
       \end{align*}  
     which gives us  the same requirement as before.      The calculation for higher dimensions is straight-forward. 
         \end{proof}
     Hence, for the class of conformal flat metrics $(\eta_{\mu\nu})_{\Theta}$, that are obtained by deformation, the non-commutative plane is essential for the requirement of centrality to hold.
In particular, apart from the "geometrical"  measurement problem and the Landau-quantization, see \cite{AA} and \cite{Muc1},   a constant non-commutative space-time such as the Moyal-Weyl is a \textbf{necessity} from a mathematical and  physical point of view and has to be   understood as a consistency requirement. This is a strict mathematical restriction that has far reaching physical implications. The implication, is the following. If we express the metric in terms of operators on a Hilbert space  then for metrics such as the conformal flat families that emerged out of deformation,   \textbf{quantization of space-time is an essential ingredient}. 
 
	\section{Discussion and Outlook}
In this work we have shown that from a strict deformation quantization of flat space-time we were able to obtain a curved space-time. In particular, we investigated how to relate such a deformation with a physical space-time, such as the FRW space-time of a dark energy dominated universe. Moreover, the terms we obtained by deformation turned out to supply us with additional solutions to the classical Einstein-equations. Apart from a coordinate dependent deformation of the well-known flat FRW space-time we obtain in addition a whole class of conformal-flat space-times.   \\\\
Since we chose a particularly simple algebra from the start of our investigation, this scheme has a rich family of possible extensions. For example a more complex algebra  that satisfies the Jacobi identities is  given by, 
\begin{align*}
	[X^0,dX^0]&= i(a\,dX^0+e\,dX),   &[X^0,dX ]&= i(r\,dX^0+f\,dX),\\
	 [X ,dX^0]&= i(r\,dX^0+f\,dX) ,   & [X ,dX  ]&= i(s\,dX^0+h\,dX), 
\end{align*}
with all other commutator relations being zero and	where the constants $a,\,e,\,f,\,h,\,r,\,s$ are real and fulfill the following equations,
	\begin{align*}
 es-rf=0,\qquad f(a-f)+e(h-r)=0,\qquad r(r-h)+s(f-a)=0. 
	\end{align*} Hence, the deformation scheme, in particular the result, is not limited to conformal flat metrics. This result is solely owed to the simple choice of the algebra that we picked in this context. Therefore, more complex structural choices of the algebra, in particular of the constants $C$, result in more general curved space-times. 
	In addition to choosing a different algebra the generators themselves can be taken (on appropriate domains) to be real-valued unbounded functions of the  algebra. This is work in progress. \\\\  
  In particular, what the investigation showed was the following; we expressed the main geometrical element, namely the flat metric, as an operator-valued entity on a Hilbert space. We performed a strict deformation quantization and obtained a curved space-time and hence we had a metric inducing gravity in terms of a well-defined self-adjoint operator-valued tensor. However, for this tensor to be central, hence to satisfy a mathematical condition that is necessary to proceed in the context of non-commutative geometry, 
a quantization of space-time had to be imposed. Therefore,   a non-commutative space-time such as the Moyal-Weyl has to be understood as consistency condition from a mathematical and physical point of view w.r.t. the conformal flat metrics that are expressed in terms of operator-valued tensor products in the before indicated way. For these particular class of operator valued metrics,  quantization of the space-time, i.e. the replacement of space-time by a non-commutative structure such as the Moyal-Weyl is not a luxury but a necessity. 
	\section*{Acknowledgments} 
	We would like to thank Prof. M. Rosenbaum, Prof. D. Vergara , Prof. D. Bahns and Dr. Benito Juarez for many important discussions.  Furthermore, we express our gratitude towards Dr.  A. Andersson, Dr. Z. Much and D. Vidal-Cruzprieto  for an extensive proofreading.

\bibliographystyle{alpha}
\bibliography{allliterature1}

\end{document}